\theoremstyle{plain}
\newtheorem{theorem}{Theorem}[section]
\newtheorem{lemma}[theorem]{Lemma}
\newtheorem{proposition}[theorem]{Proposition}
\newtheorem{corollary}[theorem]{Corollary}
\theoremstyle{definition}
\newtheorem{definition}[theorem]{Definition}
\newtheorem{example}[theorem]{Example}
\newtheorem{remark}[theorem]{Remark}
\numberwithin{equation}{section}
\numberwithin{figure}{section}
\DeclareMathOperator{\Real}{Re}
\renewcommand{\Re}{\Real}
\DeclareMathOperator{\Imag}{Im}
\renewcommand{\Im}{\Imag}
\DeclareMathOperator{\sign}{sgn}
\DeclareMathOperator{\lin}{lin}
\providecommand{\abs}[1]{\left\lvert#1\right\rvert}
\providecommand{\norm}[1]{\left\lVert#1\right\rVert}
  \def\mG{\mathsf{G}}
  \def\mE{\mathsf{E}}
 \def\mv{\mathsf{v}}
 \def\me{\mathsf{e}}
\newcommand{\R}{\mathbb{R}}
\newcommand{\C}{\mathbb{C}}
\title{Airy-type evolution equations on star graphs} 
\subjclass[2010]{47B25, 81Q35, 35Q53}
\keywords{Quantum graphs, Krein spaces, third order differential operators, Airy
operator, KdV equation}
\author[D.~Mugnolo]{Delio Mugnolo}
\address{Delio Mugnolo, Lehrgebiet Analysis, Fakult\"at Mathematik und
informatik, Fern\-Universit\"at in Hagen, D-58084 Hagen, Germany}
\email{delio.mugnolo@fernuni-hagen.de}
\author[D.~Noja]{Diego Noja}
\address{Diego Noja, Dipartimento di Matematica e Applicazioni, Universit\`a di
Milano Bicocca, Via Cozzi 55, 20125 Milano, Italia }
\email{diego.noja@unimib.it}
\author[C.~Seifert]{Christian Seifert}
\address{Christian Seifert, Ludwig-Maximilians-Universit\"at M\"unchen,
Mathematisches Institut, Theresienstra{\ss}e 39, D-80333 M\"unchen, Germany}
\email{christian.seifert@mathematik.uni-muenchen.de}
\thanks{Version of \today}
\date{\today}
\begin{document}

\begin{abstract}
In the present paper the Airy operator on star graphs is defined and studied. 
The Airy operator is a third order differential operator arising in different
contexts, but our main concern is related to its role as the linear part of the
Korteweg-de Vries equation, usually studied on a line or a half-line. 
The first problem treated and solved is its correct definition, with different
characterizations, as a skew-adjoint operator on a star graph, a set of lines
connecting at a common vertex representing, for example, a network of branching channels. 
A necessary condition turns out to be that the graph is balanced, i.e. there is
the same number of ingoing and outgoing edges at the vertex. 
The simplest example is that of the line with a point interaction at the vertex.
In these cases the Airy dynamics is given by a
unitary or isometric (in the real case) group. In particular the analysis
provides the complete classification of boundary conditions giving momentum (i.e., $L^2$-norm of the solution) preserving evolution on the graph.
A second more general problem here solved is the characterization of conditions under which the Airy operator generates a contraction semigroup. 
In this case unbalanced star graphs are allowed. In both unitary and contraction dynamics, restrictions on admissible boundary conditions occur if conservation of mass (i.e.,  integral of the solution) is further imposed.
The above well posedness results can be considered preliminary to the analysis of nonlinear wave propagation on branching structures. 

%This appears to be meaningful in view of the fact that there is no clear
%understanding of which compatibility conditions have to be imposed at an interface or 
%branching point in typical applications, as for example water flow in channel
%networks.
\end{abstract}

\maketitle 
\section{Introduction}
We consider the partial differential equation
\begin{equation}\label{airy}
\frac{\partial u}{\partial t}=\alpha \frac{\partial^3 u}{\partial
x^3}+\beta\frac{\partial u}{\partial x}\ ,
\end{equation}
where $\alpha\in \mathbb R\setminus\{0\}$ and $\beta \in \mathbb R$, on half-bounded
intervals ($(-\infty,0)$ or $(0,\infty)$) and more generally on collections of copies thereof building those structures nowadays commonly known as \emph{metric star graphs}.

A metric star graph in the present setting is the structure represented by the set (see Figure \ref{fig:balanced_star_graph})
\[
\mE:=\mE_-\cup \mE_+
\]
where $\mE_+$ and $\mE_-$ are finite or countable collections of semi-infinite edges $\me$ parametrized by $(-\infty,0)$ or $(0,\infty)$ respectively. 
The half lines are connected at a vertex $\mv$, where suitable boundary conditions have to be imposed in order to result in a well posed boundary initial value problem. From a mathematical point of view the problem consists in a system of $|\mE_-|+|\mE_+|$ partial differential equations of the form \eqref{airy}, 
with possibly different coefficients $\alpha$ and $\beta$, coupled through the boundary condition at the vertex. 
Our main concern in this paper is exactly the characterization of boundary conditions yielding a well posed dynamics for equation \eqref{airy} on said metric star graph.
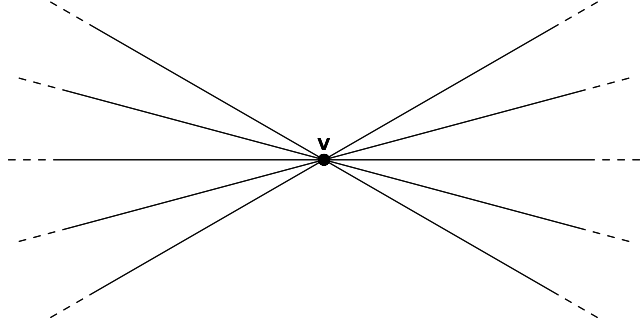
\begin{figure}[t]
\begin{center}
\begin{tikzpicture}[scale=1.0]
\foreach \x in {0,15,30,-15,-30,180,195,210,165,150}{
\draw[fill] (\x:0cm) -- (\x:3.6cm);
\draw[fill] (\x:3.7cm) -- (\x:3.8cm);
\draw[fill] (\x:3.9cm) -- (\x:4cm);
\draw[fill] (\x:4.1cm) -- (\x:4.2cm);
\draw[fill] (0:0cm) circle (2pt) node[above]{$\mv$};
}
\end{tikzpicture}
\end{center}
\caption{A balanced star graph with $|\mE_-|=|\mE_+|=5$ edges.}
\label{fig:balanced_star_graph}
\end{figure}

This is a well known problem when a Schr\"odinger operator is considered on the graph, in which case the system is called {\it quantum graph}. In this case,  extended literature exists on the topic, both for the elliptic problem and for some evolution equations, like heat or wave or reaction-diffusion equations (see \cite{BerKu12, Mugnolo14} and references therein). The analysis has been recently extended also to the case of nonlinear Schr\"odinger equation, 
in particular as regards the characterization of ground states and standing waves (see \cite{Noja14} and references therein for a review). Another dispersive nonlinear equation, the BBM equation, is treated in \cite{BonaCascaval08,MugnoloRault14}.
A partly numerical analysis and partly theoretical analysis of some special cases of the Korteweg-de Vries equation on a metric graph is given in \cite{SobAkUe15a,SobAkUe15b} and \cite{SobAkKaJa15b}; apart from these last three papers, not much seems to be known for~\eqref{airy}, the
linear part of the KdV equation. In this context, the solution of equation \eqref{airy} represents, in the long wave or small amplitude limit, the deviation of the free surface of water from its mean level in the presence of a flat bottom.
We will refer to equation \eqref{airy} as the {\it Airy equation} and to the operator on its right hand side as the {\it Airy} operator; its connection with the KdV equation is one of our main motivations for the study of this problem.
The equation~\eqref{airy} appeared for the first time in a paper by Stokes in 1847 as a contribution to the understanding of solitary waves in shallow water channels observed by Russel (\cite{Stokes47} and  \cite{Lannes13}), but some of its solutions were discussed previously by Airy (ironically, to refuse the existence of solitary waves).

One of the most fascinating features of this PDE is that it has both
dispersive and smoothing character 
%~\cite[Thm.~1.3]{ColGhi01}, 
and in fact it is
well known that~\eqref{airy} is governed by a
group of bounded
linear operators whenever its space domain is the real line,
cf.~\cite{CraGoo90}. \\
On the line, after Fourier transform, the solution of \eqref{airy} is given by
%\footnote{\textcolor{blue}{Corrected with respect to the version 3}}
%\footnote{\textcolor{blue}{Spectral and dispersive properties associated to the
%Airy operator ($\alpha$=-1 and $\beta>0$) with a potential term related to the
%linearization around a solitary wave, is studied in J.R.Miller, {\it Spectral
%properties and time decay for an Airy operator with potential}, J.Diff. Eq. {\bf
%141}, 102-121 (1997)}}

\[
u(t,x)=\frac{1}{2\pi}\int_{\mathbb R}\int_{\mathbb R}e^{ik[(x-y)-\alpha k^2
-\beta] t}u_0(y)\ dy\ dk\ ,
\]
where $u_0$ is the initial data of~\eqref{airy}.
The above Fourier formula can be further managed to give in
the standard $\alpha<0$ case
\begin{equation}
u(t,x)= K_t \ast u_0(x)\ , \ \ \ \ \ \ \ K_t(x)=\frac{1}{\sqrt[3]{3|\alpha|
t}}\text{Ai}\left(\frac{x-\beta t}{\sqrt[3]{3|\alpha| t}}\right)\ ,
\end{equation}
where 
\begin{equation}
\text{Ai}(x):=\frac{1}{2\pi}\int_{\R} e^{iy x +i\frac{1}{3}y^3}
dy=\frac{1}{\pi}\int_{\R} \cos\left(y x +\frac{1}{3}y^3\right) dy
\end{equation}
is the so called Airy function (to be intended with a hidden exponential convergence factor).\\ 
Translation invariance allows one to get rid of the first order term just changing to moving coordinates in the equation or directly in the solution, and it is not restrictive to put $\beta=0.$
In particular one has that
$||K_t||_{L^{\infty}(\R)}={\mathcal O}(\frac{1}{\sqrt[3]t})$, from which the
typical dispersive behavior of Airy equation solutions on the line follows. This estimate with its consequences is of outmost importance both at the linear level and in the analysis of nonlinear perturbation of 
the equation, such as the KdV equation (see \cite{LinPon14} and references therein).
\par
Much harder is the analysis of the properties of the Airy operator when translation invariance is broken, for example in the case of a half line or in the presence of an external potential.
In the first case much work has been done in the context of the analysis of well posedness of KdV equation.  The first problem one is faced with is the unidirectional character of the propagation, which requires some care in the definition of the correct boundary value problem. 
It is well known that for the standard problem ($\alpha<0$) of the KdV equation two boundary conditions at $0$ are needed on $(-\infty,0)$ and only one suffices on $(0,+\infty)$. The issues in the definition of the correct boundary conditions of KdV equation are inherited by the linear part, the Airy equation. A more complete analysis of the problem on the half line is given in the following section.\\
The Airy operator with an external potential, representing the effect of an obstacle in the propagation or the result of a linearization around known stationary solutions is studied in \cite{Miller97}, especially as regards spectral and dispersion properties, while a different model with an inhomogeneous dispersion obtained through the introduction of a space dependent coefficient for the third derivative term is studied in \cite{CraGoo90}. \\
In this paper we are interested in the generalization of the half-line example and from now on we will discuss only the case of a star graph in which the Airy operator has constant coefficients on every single edge (but possibly different coefficients from one edge to another) and discard any other source of inhomogeneity. \\
Our first goal is to properly define the Airy operator as an unbounded operator on a certain Hilbert space, in such a way it turns out to be the generator of a $C_0$-semigroup. We will consider  two cases in increasing order of generality. The first is the one in which the generated dynamics is unitary (in the complex case) or isometric (in the real case); the second is the case in which the generated dynamics is given by a contraction $C_0$-semigroup.
The easiest way to understand the nature of the problem is to consider the densely defined closable and skew-symmetric Airy operator $
u\mapsto A_0 u:= \alpha \frac{d^3 u}{d x^3}+\beta\frac{d u}{d x}$ with domain $C^\infty_c((-\infty,0))$ or $
C^\infty_c((0,+\infty))$ on the Hilbert spaces 
$L^2((-\infty,0))$ or $L^2((0,+\infty))$ respectively. Due to the fact that the Airy operator is of odd order, changing the sign of  $\alpha$ it is equivalent to exchange the positive and negative half line and so we can take $\alpha>0$ without loss of generality.\\
An easy control shows that the deficiency indices of the two operators are $(2,1)$ in the first case, $(1,2)$ in the second case, so that $iA_0$ does not have any self-adjoint extension on any of the two half lines. However, the direct sum of the operators on the two half lines results in a symmetric operator on $C^\infty_c((-\infty,0)\cup
(0,+\infty))$ with equal deficiency indices $(3,3)$. Hence, thanks to the classical von Neumann-Krein theory it admits a nine parameter family of self-adjoint extensions generating a unitary dynamics in $L^2(\R)$. Of course, this is not the only possibility to generate a dynamics: for instance, it could be the case that a suitable extension $A$ of the operator on the half line generates a non-unitary semigroup, so not conserving $L^2$-norm, that still consists of  $L^2$-{\it contractions}. 
According to the Lumer-Phillips theorem this holds true if and only if both $A$ and its adjoint $A^*$ are dissipative. Dissipativity in fact occurs {\it if the right number of correct boundary conditions are added}; for example in the standard case with $\alpha<0$, the Dirichlet condition on $u(0)$ on the half line $(0,+\infty)$ is sufficient, but two conditions, for example the two Dirichlet and Neumann boundary conditions on $u(0)$ and $u_x(0)$, are needed on $(-\infty,0)$.\\
This is the ultimate reason why more or less explicit results about Airy semigroup formulae exist in the above examples and they can be fruitfully applied to some cases of nonlinear perturbation such as the KdV equation on the half line. \\ The above basic remark, which seemingly went unnoticed in the previous literature on the subject, is the starting point to the treatment of the Airy operator on the more general case of a star graph. 
To efficiently treat the case of star graphs, we exploit the fact that the Airy operator is anti-symmetric, and we want to give first existence conditions for its skew-adjoint extensions and their classification. 
This can be done in principle in several ways and here we rely on a recent analysis making use (in the intermediate steps of the construction) of  Krein spaces with indefinite inner products, recently developed in \cite{SchSeiVoi15}. 
As suggested by the example of the half-line a necessary condition for skew-adjointness is that $\mE_-= \mE_+$, i.e. the number of incoming half 
lines is the same of outgoing half-lines. When this condition is met the graph is said to be {\it balanced}. A similar necessary condition was shown to be true in the case of the quantum momentum operator $-i\frac{d}{dx}$ on a graph (see  \cite{Carlson99, Exner13}). The complete characterization of skew-adjoint boundary conditions is more complex, and is given in Theorem \ref{thm:characskewself} and Theorem \ref{thm:characselforthog}. 
To explain, we introduce the space of boundary values at the vertex for the domain element of the adjoint operator $A_0^*$: 
These are given by $((u(0-),u'(0-),u''(0-))^T$ and $((u(0+),u'(0+),u''(0+))^T$, spanning respectively spaces $\mathcal{G}_-$ and $\mathcal{G}_+$ (notice that $u(0-), u(0+)$ etc.\ are vectors with components given by the boundary values on the single edges and ``minus'' and ``plus'' mean that they are taken on edges in $\mE_-$ or $\mE_+$ respectively). 
The boundary form of the operator $A_0$ is given by
\begin{align*}
(A_0^* u \mid v) + (u\mid A_0^*v) &=  \bigg(\begin{pmatrix} -\beta_- & 0 & -\alpha_- \\ 0 & \alpha_- & 0 \\
-\alpha_- & 0 & 0\end{pmatrix}\begin{pmatrix}
u(0-)\\u'(0-)\\u''(0-)\end{pmatrix} \bigg| \begin{pmatrix}
v(0-)\\v'(0-)\\v''(0-)\end{pmatrix} \bigg)_{\mathcal{G}_-}\\
  & - \bigg(\begin{pmatrix} -\beta_+ & 0 & -\alpha_+ \\ 0 & \alpha_+ & 0 \\
-\alpha_+ & 0 & 0\end{pmatrix}\begin{pmatrix}
u(0+)\\u'(0+)\\u''(0+)\end{pmatrix} \bigg| \begin{pmatrix}
v(0+)\\v'(0+)\\v''(0+)\end{pmatrix} \bigg)_{\mathcal{G}_+}\\ &= \bigg(B_-\begin{pmatrix}
u(0-)\\u'(0-)\\u''(0-)\end{pmatrix}\bigg| \begin{pmatrix}
u(0-)\\u'(0-)\\u''(0-)\end{pmatrix}\bigg)_{\mathcal{G}_-}-\bigg(B_+\begin{pmatrix}
u(0+)\\u'(0+)\\u''(0+)\end{pmatrix}\bigg| \begin{pmatrix}
u(0+)\\u'(0+)\\u''(0+)\end{pmatrix}\bigg)_{\mathcal{G}_+}\end{align*}
where, with obvious notation, $\alpha_\pm$ and $\beta_\pm$ are  vector-valued coefficients in $\mE_\pm$ of the Airy equation on the graph. The block matrices $B_\pm$ are non degenerate, symmetric but indefinite and endow the boundary spaces $\mathcal{G}_\pm$ with the structure of a Krein space. Correspondingly, the space $\mathcal{G}_-\oplus \mathcal{G}_+$ is endowed with the sesquilinear form
$$
\omega((x,y),(u,v)\bigr) = 
  \Bigg( \begin{pmatrix} B_- & 0 \\ 0 & -B_+\end{pmatrix} \begin{pmatrix}
x\\y\end{pmatrix} \Bigg| \begin{pmatrix} u\\v\end{pmatrix}\Bigg)\ .
$$
The first important characterization is that skew-adjoint extensions of $A_0$ are parametrized by the subspaces $X$ of $\mathcal{G}_-\oplus \mathcal{G}_+$  which are $\omega$-{\it self-orthogonal} ($X=X^{\bot}$ where orthogonality is with respect to the indefinite sesquilinear form $\omega $, see Definition \ref{defi:krein} for more details). An equivalent, more explicit parametrization is through relations between boundary values. Consider a linear operator $L:\mathcal{G}_-\to\mathcal{G}_+$ (here we describe the simplest case in which $\mathcal{G}_\pm$ are finite dimensional; for the general case see Section \ref{sec:star_graphs})
and define 
\begin{equation*}
  D(A_L)  = \Bigl\{u\in D(A_0^*);\; 
  L\bigl(u(0-),u'(0-),u''(0-)\bigr) = \bigl(u(0+),u'(0+),u''(0+)\bigr)\Bigr\},\ \ \ 
  A_L u  = -A_0^*u.
\end{equation*}
Then the skew adjoint extensions of $A_0$ are the operators $A_L$ such that $L$ is \emph{$(\mathcal{G}_-,\mathcal{G}_+)$-unitary} ($\langle Lx \mid Ly \rangle_+ = \langle x\mid y\rangle_- $ and again see Definition \ref{defi:krein}).

Along similar lines can be treated the characterization of extensions $A$ of $A_0$ generating a contraction semigroup.
According to Lumer-Phillips theorem, one has that $A$ and $A^*$ have to be dissipative. For a linear operator $L: \mathcal{G}_- \to \mathcal{G}_+$ define $A_L\subseteq -A_0^*$ as above.
Denoting by $L^\sharp$ the $(\mathcal{G}_+,\mathcal{G}_-)$-adjoint of the operator $L$ (cf.\ Definition~\ref{defi:krein}) one  has that
$A_L$ is the generator of a contraction semigroup
  if and only if $L$ is a $(\mathcal{G}_-,\mathcal{G}_+)$-contraction (i.e., $\langle Lx \mid Lx \rangle_+ \leq \langle x\mid x\rangle_- $ for all $x\in \mathcal G_-$)  and $L^{\sharp}$ is a $(\mathcal{G}_+,\mathcal{G}_-)$-contraction (i.e., $\langle L^{\sharp}x \mid L^{\sharp} x \rangle_- \leq \langle x\mid x\rangle_+$ for all $x\in \mathcal G_+$).\\ This condition allows for a different number of ingoing and outgoing half-lines, and it is well adapted to study realistic configurations such as a branching channel.\\Notice that in both the skew-adjoint or dissipative case one can consider the possible conservation of an additional physical quantity, the {\it mass}, coinciding with the integral of the domain element of $A$. Conservation of mass  (see Remark \ref{rem:mass}) is characterized by boundary conditions satisfying the constraint
 %(recall that the $\alpha$, $\beta$, $u(0)$ and $u''(0)$ are vector quantities)
\begin{equation*}
\sum_{\me\in\mE_-} \alpha_\me u_\me''(0-) - \sum_{\me\in\mE_+} \alpha_\me u_\me''(0+) + \sum_{\me\in\mE_-} \beta_\me u_\me(0-) - \sum_{\me\in\mE_+} \beta_\me u_\me (0+) = 0\ .
\end{equation*}
Of course the requirement of mass conservation restricts the allowable boundary conditions, both in the skew-adjoint and the dissipative case.\\
In Section 4 we provide a collection of more concrete examples. Already at the level of general analysis previously done it is clear that a distinguished class of boundary conditions exists, in which the first derivative boundary values $u'(0\pm)$ are separated. This means that they do not interact with the boundary values of $u$ and $u''$ and satisfy the transmission relation $u'(0+)=Uu(0-)$ with $U$ unitary in the skew-adjoint case and a contraction in the dissipative case, while $u(0\pm)$ and $u''(0\pm)$ are coupled. This case is described at the beginning of Section 4. Some more special examples deserve an interest. The first is the graph consisting of two half-lines. This case is interesting also because it can be interpreted as describing the presence of an obstacle or point interaction on the line. 
In the skew-adjoint case this perturbation does not destroy the conservation of momentum during the evolution. Moreover, there are both skew-adjoint and more generically contractive boundary conditions that 
conserve the mass as well. In this sense the two half-lines case corresponds to a forcing or interaction which however can preserve in time some physical quantity.  A related recent analysis is given in \cite{DeconinckSheilsSmith16}  where an inhomogeneous interface problem for the Airy operator on the line (in the special case $\beta=0$) is studied by means of the Fokas Unified Transform Method (UTM) (see \cite{Fokas08}) and necessary and sufficient conditions for its solvability are given in terms of interface conditions (in that paper the authors prefer to distinguish between interface and boundary conditions; here this usage is not followed). It surely would be interesting to compare the interface conditions studied in the quoted paper with the ones derived in the present one both as regards the skew-adjoint case and the contraction case.\\
A second class of examples are given for the graph with three half-lines, which falls necessarily in the non skew-adjoint case. 
%Some explicit examples of boundary conditions corresponding to generation of contractive semigroups are given, also in the case of non separation of first derivative.
A more accurate analysis of this example is relevant in perspective, because of the possible application to the analysis of flow in branching channels, which has attracted some attentions in recent years (see \cite{NachbinSimoes12, NachbinSimoes15} and references therein, and the interesting early paper \cite{Jakovkis91} where different models of flows are treated but much of the analysis has a general value). 
It is not at all clear which boundary conditions should be the correct ones from a physical point of view, and the complete classification of those giving generation at least for the Airy operator is a first step to fully understand realistic situations. In general one expects, on theoretical and experimental 
grounds, that one dimensional reductions retain some memory of the geometry (for example the angles of the fork) which are not contemplated in the pure graph description, and a further step in the analysis should consists in including these effects through the introduction of further phenomenological parameters or additional terms in the equation.\\

The paper is so organized. In Section 2 the simplest case of a single half-line is considered, giving also some comments on the previous literature on the subject. In Section 3 the complete construction of the Airy operator on a star graph in the skew-adjoint and dissipative case is given. The exposition is self contained as regards the preliminary definitions on graphs and operators on Krein spaces. Besides construction, some general properties of the Airy operators are further studied. In Section 4 some examples are treated.
   
  \section{The case of the half-line}
As a warm-up in the present Section we consider the Airy equation on the half-line. This is an often considered subject, because of its relevance in connection with the analogous initial boundary value problem for the KdV equation, which turns out to be a rather challenging problem especially when studied in low Sobolev regularity (see for pertinent papers on the subject with information on the linear problem \cite{BonaSunZhang01, CollianderKenig02, HayashiKaikina04, Holmer06, Faminskii07, FokasHiMan16} and references therein). For the interesting case of the interval see also \cite{ColGhi01,BonaSunZhang03}). We recall that Korteweg and de Vries derived, under several hypotheses, the equation
\begin{equation}\label{eq:kdv}
\frac{\partial \eta}{\partial t}=\frac{3}{2}\sqrt{\frac{g}{\ell}}\left(\frac{\sigma}{3}\frac{\partial^3 \eta}{\partial x^3} +\frac{2\alpha}{3}\frac{\partial \eta}{\partial x}+\frac12 \frac{\partial \eta^2}{\partial x}\right)
\end{equation}
where the unknown $\eta$ is the elevation of the water surface with respect to its average depth $\ell$ in a shallow canal (see~\cite{KordeV95} for an explanation of the parameters $g,\sigma,\alpha$; $\sigma$ is implicitly assumed to be negative there), or \cite{Lannes13}) for a modern and complete analysis. Renaming coefficients one obtains the KdV equation with parameter $\alpha$ in front of the third derivative and $\beta$ in front of the first derivative. The physically relevant case of a semi-infinite channel represented by half-line $[0,+\infty)$ and a
wavemaker placed at $x=0$ is described, if dissipation is neglected, by the KdV equation in which the linear part has $\alpha<0$ and $\beta<0$ (see \cite{BonaSunZhang01} and references therein). So one obtains, neglecting nonlinearity, the Airy equation on the positive half-line as the linear part of the KdV equation {\it with the above sign of coefficients}. The first derivative term disappear on the whole line changing to a moving reference system, but on the half-line or on a non translational invariant set should be retained, and we do this. The sign of coefficient of the third derivative is important and interacts with the choice of the half-line, left or right, where propagation occurs. Depending on the sign of $\alpha$, a
different number of boundary conditions has to be imposed: one and two boundary conditions,
on $(0,\infty)$ and $(-\infty,0)$ respectively if $\alpha<0$, and vice versa two and one boundary conditions if $\alpha>0$. This reflects the fact that the partial differential equation~\eqref{airy} has unidirectional nature, and is explained in the literature in several different ways. An explanation making resort to the behavior of characteristic curves is recalled for example in \cite{DeconinckSheilsSmith16}, where the authors however notice that it has only an heuristic value. A more convincing brief discussion is given, for $\beta=0$ as the previous one, in \cite{Holmer06}, which we reproduce now with minor modifications and considering $\alpha=\pm 1$ for simplicity, which can always be achieved by rescaling the space variable.
On the left half line, after multiplying both sides of
\eqref{airy} by $u$ and integrating on $(-\infty,0)$ one obtains 
\begin{equation*}
\int^0_{-\infty} u_t(t,x)u(t,x) \ dx = \ \pm u_{xx}(t,0-)u(t,0-) \mp
\frac{1}{2}u_x^2(t,0-  +\beta \frac{1}{2} u^2(t,0-).
\end{equation*}
Integrating in time on $(0,T)$ one finally obtains the identity
\begin{equation*}
\begin{aligned}
&\frac{1}{2}\int_{-\infty}^0 u^2(T,x)\ dx - \frac{1}{2}\int_{-\infty}^0
u^2(0,x)\ dx \\ &= \pm \int_0^T u_{xx}(t,0-)u(t,0-)\ dt \mp \frac{1}{2}\int_0^T
u_{x}^2(t,0-)\ dt +\beta \frac{1}{2}\int_0^T u^2(t,0-)\ dt.
\end{aligned}
\end{equation*}
If we consider the operator with $\alpha=-1$, which is the standard Airy operator,
we conclude that the boundary condition $u(t,0-)=0$ alone and the initial
condition $u(0,x)=0$ are compatible with a non vanishing solution of the
equation: to force a vanishing solution one has to fix the boundary value
$u_x(t,0-)=0$ also. So uniqueness is guaranteed by both boundary conditions on
$u$ and $u_x$. On the contrary the above identity for the operator with $\alpha=1$
with the boundary condition $u(t,0-)=0$ and the initial condition $u(0,x)=0$
imply $u(T,x)=0$ for any $T$. So one has uniqueness in the presence of the only
condition on $u(t,0)$.  \\
In the case of the positive half line $(0,\infty)$ one obtains from \eqref{airy}
\begin{equation*}
\begin{aligned}
&\frac{1}{2}\int_0^{\infty} u^2(T,x)\ dx-\frac{1}{2}\int^{\infty}_0 u^2(0+,x)\
dx \\= &\mp \int_0^T u_{xx}(t,0+)u(t,0+)\ dt \pm \frac{1}{2}\int_0^T
u_{x}^2(t,0+)\ dt - \beta \frac{1}{2}\int_0^T u^2(t,0+)\ dt
\end{aligned}
\end{equation*}
and for $\alpha=-1$ one has uniqueness in the presence of the boundary
condition for $u(t,0+)$ alone, while for  $\alpha=+1$ uniqueness needs the
specification of both $u(t,0+)$ and $u(t,0+)$.
Considering the difference of two solutions corresponding to identical initial
data and boundary conditions, one concludes that the same properties hold true
in the case of general non vanishing boundary conditions.

One can go in greater depth and the following lemma is the point of departure. It clarifies some of the properties of the Airy operator on the half-line and will be extended, with many consequences, to a star graph in the following sections.
\begin{lemma}\label{lem:oper-1edge}
Consider the operator 
\[
u\mapsto H_0 u:= \alpha \frac{d^3 u}{d x^3}+\beta\frac{d u}{d x}
\]
with domain
\[
\hbox{either }\quad C^\infty_c(-\infty,0)\quad\hbox{or }\quad
C^\infty_c(0,+\infty)
\]
on the Hilbert space 
\[
L^2(-\infty,0),\quad\hbox{resp. }\quad L^2(0,+\infty).
\]
Then $iH_0$ is densely defined, closable and symmetric. However, its deficiency
indices are $(2,1)$ in the first case, $(1,2)$ in the second case, so $iH_0$
does not have any self-adjoint extensions in either case.
\end{lemma}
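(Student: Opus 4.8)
The plan is to establish the three structural properties directly and then to reduce the deficiency-index computation to counting the roots of a cubic by half-plane. Density of the domain is immediate, since $C^\infty_c$ is dense in $L^2$ of the respective half-line. For symmetry I would integrate by parts: for $u,v\in C^\infty_c$ of a half-line all boundary terms vanish by compact support, so $(\alpha u'''+\beta u'\mid v)=-(u\mid \alpha v'''+\beta v')$; that is, $H_0$ is skew-symmetric and hence $iH_0$ is symmetric. Closability then comes for free, since a densely defined symmetric operator satisfies $iH_0\subseteq(iH_0)^*$, so $(iH_0)^*$ is densely defined and $iH_0$ is closable.

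Next I would identify the adjoint. The formal (Lagrange) adjoint of $\tau:=\alpha \frac{d^3}{dx^3}+\beta\frac{d}{dx}$ is $-\tau$, so $iH_0$ is formally self-adjoint and, by the standard ODE fact that the adjoint of the minimal operator is the maximal one, $(iH_0)^*$ acts as the same expression $i\tau$ on the maximal domain. Consequently the deficiency spaces are
\[
\ker\bigl((iH_0)^*\mp iI\bigr)=\bigl\{\phi\in L^2:\ \alpha\phi'''+\beta\phi'\mp\phi=0\bigr\}.
\]
Since every distributional $L^2$ solution of a constant-coefficient ODE is classical, each deficiency index equals the number of exponential solutions $e^{\lambda x}$ (with their polynomial multiples in the case of repeated roots) that are square-integrable on the given half-line: on $(0,+\infty)$ those with $\Re\lambda<0$, on $(-\infty,0)$ those with $\Re\lambda>0$. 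The relevant characteristic polynomials are $p_\pm(\lambda)=\alpha\lambda^3+\beta\lambda\mp 1$.

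The crux is the root count, for which I take $\alpha>0$ as we may (see the introduction). First I would rule out roots on the imaginary axis: evaluating $p_\pm$ at $\lambda=i\omega$ leaves real part $\mp1\neq 0$, and $\lambda=0$ is excluded too, so all three roots have nonzero real part and it remains to show the split is $1$/$2$. Vieta's relations for $p_+$ give sum of roots $=0$ and product $=1/\alpha>0$. If the three roots are real, a positive product with zero sum forces exactly one positive and two negative roots; if instead there is one real root $r$ and a conjugate pair $a\pm bi$, then $r(a^2+b^2)=1/\alpha>0$ forces $r>0$, while $r+2a=0$ forces $a<0$. Either way $p_+$ has exactly one root with positive and two with negative real part, and since $p_-(\lambda)=-p_+(-\lambda)$ the roots of $p_-$ are the negatives of those of $p_+$, reversing the count. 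Combining this with the half-line integrability criterion reproduces the stated values $(2,1)$ on $(-\infty,0)$ and $(1,2)$ on $(0,+\infty)$; as the two indices differ, von Neumann's extension theorem excludes any self-adjoint extension.

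I expect the only genuinely delicate step to be this root count, namely controlling the signs of the real parts uniformly in the parameter $\beta\in\R$ and across degenerate configurations with a repeated root. The Vieta argument above is designed to handle exactly this: once purely imaginary roots are excluded, the sum/product relations pin down the $1$/$2$ split without any case analysis in $\beta$, and repeated real roots are accommodated because a double root $\lambda_0$ still contributes solutions $e^{\lambda_0 x}$ and $x e^{\lambda_0 x}$ that are simultaneously $L^2$ or not according to the sign of $\Re\lambda_0$.
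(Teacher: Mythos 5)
Your proof is correct, and its overall skeleton is the same as the paper's: both identify $(iH_0)^*$ as the maximal operator acting by the same formal expression, reduce the deficiency indices to counting square-integrable exponential solutions of the constant-coefficient equations $\alpha\phi'''+\beta\phi'\mp\phi=0$, and read off the answer from the signs of the real parts of the characteristic roots. Where you genuinely diverge is at the crucial root-counting step. The paper writes the three roots out explicitly via Cardano's formula -- a half-page of nested radicals in $\alpha$ and $\beta$ -- and then simply asserts that ``carefully checking the real parts of the exponents'' yields the $1$/$2$ split. Your replacement of this by Vieta's relations (zero sum of roots, product $1/\alpha>0$) together with the observation that $p_\pm(i\omega)$ has real part $\mp1\neq0$ is cleaner and, frankly, more convincing: it is uniform in $\beta$, requires no explicit formulas, and your remark about repeated roots (necessarily real for a real cubic, and covered by the same sign count with multiplicity) closes a degenerate case the paper does not address at all. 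The relation $p_-(\lambda)=-p_+(-\lambda)$ neatly explains why the two half-lines swap roles. One trivial bookkeeping point: with your own convention $n_\pm=\dim\ker\bigl((iH_0)^*\mp iI\bigr)$ and characteristic polynomials $p_\pm$, the count on $(-\infty,0)$ comes out as $(1,2)$ and on $(0,+\infty)$ as $(2,1)$, i.e., the opposite ordering from what you state in your last display of numbers; this is purely a matter of which deficiency subspace is labelled ``$+$'' (the paper's own sign conventions are ambiguous here, since it writes $iH_0^*$ rather than $(iH_0)^*$, and these differ by a sign), and it does not affect the only fact that matters, namely that the two indices are $1$ and $2$ in some order and hence unequal, so von Neumann's theorem excludes self-adjoint extensions.
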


\begin{proof}
One checks directly that the adjoint of $H_0$ on $L^2(-\infty,0)$ and
$L^2(0,\infty)$ is $H_0^*=-\alpha \frac{d^3 u}{d x^3}-\beta\frac{d u}{d x}$ with
domain $H^3(-\infty,0)$ and $H^3(0,\infty)$, respectively.

We are going to solve the elliptic problem $(iH_0^*\mp i{\rm Id})u=0$ in
$L^2(-\infty,0)$ and $L^2(0,\infty)$. Let us consider the sign $+$ and discuss
\[
\alpha \frac{d^3 u}{d x^3}+\beta\frac{d u}{d x}+ u=0
\]
for $u\in L^2(-\infty,0)$ or $L^2(0,\infty)$ without any boundary conditions. A
tedious but elementary computation shows that a general solution is a linear
combination of complex exponentials of the form
\[\begin{split}
u(x)&=C_1{\rm e}^{-x\frac{ \left( i \sqrt{3} \left(  \left( 12\, \sqrt{3}
\sqrt{{\frac {4\,{\beta}^{3}+27\,\alpha}{\alpha}}}-108 \right) {\alpha}^{2}
\right) ^{2/3}+12\,i \sqrt{3}\alpha\,\beta+ \left(  \left( 12\, \sqrt{3}
\sqrt{{\frac {4\,{\beta}^{3}+27\,\alpha}{\alpha}}}-108 \right) {\alpha}^{2}
\right) ^{2/3}-12\,\beta\,\alpha \right)}{12\alpha\left( \left( 12\, \sqrt{3}
\sqrt{{\frac {4\,{\beta}^{3}+27\,\alpha}{\alpha}}}-108 \right)
{\alpha}^{2}\right)^\frac{1}{3}}}\\
&\quad +C_2{\rm e}^{x\frac{ \left( i \sqrt{3} \left(  \left( 12\, \sqrt{3}
\sqrt{{\frac {4\,{\beta}^{3}+27\,\alpha}{\alpha}}}-108 \right) {\alpha}^{2}
\right) ^{2/3}+12\,i \sqrt{3}\alpha\,\beta- \left(  \left( 12\, \sqrt{3}
\sqrt{{\frac {4\,{\beta}^{3}+27\,\alpha}{\alpha}}}-108 \right) {\alpha}^{2}
\right) ^{2/3}+12\,\beta\,\alpha \right)}{12\alpha\left( \left( 12\, \sqrt{3}
\sqrt{{\frac {4\,{\beta}^{3}+27\,\alpha}{\alpha}}}-108 \right)
{\alpha}^{2}\right)^\frac{1}{3}}}\\
&\quad+C_3\,{{\rm e}^{x\frac{ \left(  \left(  \left( 12\, \sqrt{3} \sqrt{{\frac
{4\,{\beta}^{3}+27\,\alpha}{\alpha}}}-108 \right) {\alpha}^{2} \right)
^{2/3}-12\,\beta\,\alpha \right)}{6\alpha \left( \left( 12\, \sqrt{3}
\sqrt{{\frac {4\,{\beta}^{3}+27\,\alpha}{\alpha}}}-108 \right)
{\alpha}^{2}\right)^\frac{1}{3}}}}
\end{split}
\]
for general constants $C_1,C_2,C_3$. However, carefully checking the real parts
of the exponents one deduces that such functions are square integrable on
$(-\infty,0)$ and $(0,\infty)$ if and only if $C_3=0$ and $C_1=C_2=0$,
respectively, thus yielding the claim on the deficiency indices. The remaining
case can be treated likewise.
\end{proof}
This shows that the Airy operator cannot be extended to a skew adjoint operator generating a unitary dynamics in $L^2$. 
Moreover in some sense the Airy operator ``with the wrong sign'' or too a few boundary conditions has too much spectrum to allow for uniqueness (see \cite[Theorem 23.7.2]{HiPhi57}).
This is not however the whole story, and one can obtain more precise information and some more guiding ideas renouncing to a unitary evolution and asking simply for generation of a contractive semigroup. 
To this end we consider the Lumer-Phillips condition and its consequences
(see  for example \cite[Corollary 3.17]{EngelNagel99}).
Again, considering only the case $\alpha=\pm 1$ which is enough, one has by integration by parts
\begin{equation*}
\begin{aligned}
\int_{-\infty}^0(\pm u_{xxx} +\beta u_x) v \ dx&=\int_{-\infty}^0 u(\mp v_{xxx}
-\beta v_x)  \ dx\\
&\pm u_{xx}(0)v(0) +\beta u(0)v(0)\mp u_x(0)v_x(0) \pm u(0)v_{xx}(0)
\end{aligned}
\end{equation*}
The operator $H_{\pm,\beta}$ with domain $${\mathcal D}(H_{\pm,\beta})=\left\{
u\in H^3(-\infty,0)\ ;\ u(0)=0\ , \ \ u_x(0)=0 \right\}$$ and action
$$H_{\pm,\beta}u=\pm u_{xxx}+\beta u_x$$
satisfies
\begin{equation*}
\langle H_{\pm,\beta}u,u \rangle= 0
\end{equation*}
and it is dissipative (in fact conservative).\\
The operator $H^\lozenge_{\pm,\beta}$ with domain $${\mathcal
D}(H^\lozenge_{\pm,\beta})=\left\{ u\in H^3(-\infty,0)\ ;\ u(0)=0\right\}$$ and
action $$H^\lozenge_{\pm,\beta}u=\mp u_{xxx}-\beta u_x$$
satisfies
\begin{equation*}
\langle H^\lozenge_{\pm,\beta}u,u \rangle= \pm \frac{1}{2}\bigl(u_x(0-)\bigr)^2
\end{equation*}
and so $H^\lozenge_{+,\beta}$ is accretive and $H^\lozenge_{-,\beta}$ is
dissipative for every $\beta\in \R$. \\
$H_{\pm,\beta}$ and $H^\lozenge_{\pm,\beta}$ are in fact adjoint one to
another: 
\begin{equation}
H^\lozenge_{\pm,\beta}=H^*_{\pm,\beta}\ .
\end{equation}
In particular this means that $H_{+,\beta}$ and $H^*_{+,\beta}$ are both
accretive and so they do not generate a continuous contraction semigroup in
$L^2(-\infty, 0)$. On the contrary, $H_{-,\beta}$ and $H^*_{-,\beta}$ are both
dissipative and generate a contraction semigroup in $L^2(-\infty,0).$
With our convention of writing of the Airy equation, this gives well posedness
on $(-\infty, 0)$ for the standard Airy equation \eqref{airy} with two boundary conditions (generator $H_{-,\beta}$).
The specular situation occurs for $(0,\infty)$, exchanging the roles and
definitions of $H$ and $H^\lozenge$, and one has that $H^\lozenge_{+,\beta}$ generates
a contraction semigroup on $L^2(0,\infty)$ and the standard Airy equation \eqref{airy} with a single boundary condition is well posed.

%\textcolor{orange}{
%\begin{remark}
%The choice of studying~\eqref{airy} in an $L^2$-setting is not quite self-explanatory. Admittedly, there are certain drawbacks: in view of the envisaged  application of our results to the KdV equation~\eqref{eq:kdv}, $L^2$ is a not completely appropriate space: indeed, cnoidal waves (the most classical and peculiar solution of the KdV equation) are periodic functions and thus they do not belong to $L^2$. 
%On the other hand, one does expect conservation of momentum $\int \eta^2 dx$ in the case of the KdV equation, so searching for generators of semigroups of $L^2$-isometries is not so incongruous after all.
%\end{remark}
%}

\section{The case of a metric star graph}
\label{sec:star_graphs}
Star graphs can be regarded as the building blocks of more complicated graphs;
for the purpose of investigating (local) boundary conditions, they are
sufficiently generic. Therefore, in this section we are going to develop the
theory of the counterpart of the operator $H_0$ defined on a star graph
$\mG$, which indeed turns out to display some unexpected behaviours in
comparison with its simpler relative introduced in
Lemma~\ref{lem:oper-1edge}.

Upon replacing an interval $(-\infty,0)$ by $(0,\infty)$ or vice versa, we may
assume all coefficients $\alpha$ to have the same sign on each edge $\me$ of the
star graph. Throughout this paper we are going to follow the convention that all coefficients are positive.

\begin{proposition}\label{prop:first}
Consider a quantum graph consisting of finitely or countably many halflines
$\mE:=\mE_-\cup \mE_+$ and let $(\alpha_\me)_{\me\in \mE},(\beta_\me)_{\me\in \mE}$ be two sequences of real numbers with $\alpha_\me>0$ for all $\me \in \mE$.
Consider the operator $A_0$ defined by
\[
\begin{split}
  D(A_0) & := \bigoplus_{\me\in \mE_-} C^\infty_c(-\infty,0)\oplus
\bigoplus_{\me\in \mE_+} C^\infty_c(0,+\infty),\\
A_0&:(u_\me)_{\me\in \mE} \mapsto \left(\alpha_\me \frac{d^3 u_\me}{d
x^3}+\beta_\me\frac{d u_\me}{d x}\right)_{\me\in \mE}.
\end{split}
\]
Then $iA_0$ is densely defined and symmetric on the Hilbert space
$$L^2(\mG):=\bigoplus_{\me\in \mE_-} L^2(-\infty,0)\oplus
\bigoplus_{\me\in \mE_+} L^2(0,+\infty) $$
and its defect indices are
$(2|\mE_-|+|\mE_+|,|\mE_-|+2|\mE_+|)$. Accordingly, $A_0$ has skew-self-adjoint
extensions on $L^2(\mG)$ if and only if $|\mE_+|=|\mE_-|$.
\end{proposition}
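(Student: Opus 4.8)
The plan is to recognise $A_0$ as an orthogonal direct sum of the single-edge operators already analysed in Lemma~\ref{lem:oper-1edge} and to transport the deficiency-index computation through the direct-sum decomposition. Density and symmetry are disposed of first: since $C^\infty_c(-\infty,0)$ and $C^\infty_c(0,+\infty)$ are dense in $L^2(-\infty,0)$ and $L^2(0,+\infty)$, the algebraic direct sum $D(A_0)$ is dense in $L^2(\mG)$. For symmetry of $iA_0$, note that for $u=(u_\me),v=(v_\me)\in D(A_0)$ each component has compact support, so integrating by parts three times on each edge annihilates all boundary terms, giving $(A_0u\mid v)=-(u\mid A_0v)$; hence $A_0$ is skew-symmetric and $iA_0$ is symmetric. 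Writing $H_0^{(\me)}$ for the single-edge minimal operator of Lemma~\ref{lem:oper-1edge} on the edge $\me$, one has $A_0=\bigoplus_{\me\in\mE}H_0^{(\me)}$ as an orthogonal direct sum of densely defined operators on the corresponding summands of $L^2(\mG)$.

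Next I would identify the adjoint and split the defect spaces. The adjoint of an orthogonal direct sum of densely defined operators is the direct sum of the adjoints, with domain $\{(u_\me):\ u_\me\in D((H_0^{(\me)})^*),\ \sum_\me\|(H_0^{(\me)})^*u_\me\|^2<\infty\}$; by the single-edge computation this yields $A_0^*=\bigoplus_\me (H_0^{(\me)})^*$ acting as $(u_\me)\mapsto(-\alpha_\me u_\me'''-\beta_\me u_\me')$ on $\bigoplus_{\me\in\mE_-}H^3(-\infty,0)\oplus\bigoplus_{\me\in\mE_+}H^3(0,+\infty)$, subject to the $\ell^2$ constraint. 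Consequently the defect subspaces split as Hilbert direct sums, $\ker\bigl((iA_0)^*\mp i\bigr)=\bigoplus_\me\ker\bigl((iH_0^{(\me)})^*\mp i\bigr)$: a family $(u_\me)$ lies in the left-hand kernel exactly when each $u_\me$ solves the corresponding single-edge defect equation and the family is square summable, and since the edges are decoupled no further solutions arise. Therefore the deficiency indices add over edges.

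Finally I would read off the count from Lemma~\ref{lem:oper-1edge}: each edge in $\mE_-$ contributes $(2,1)$ and each edge in $\mE_+$ contributes $(1,2)$, so
\[
n_+ = 2|\mE_-|+|\mE_+|,\qquad n_- = |\mE_-|+2|\mE_+|.
\]
By von Neumann's criterion $A_0$ has a skew-self-adjoint extension iff $iA_0$ has a self-adjoint extension iff $n_+=n_-$, i.e.\ iff $2|\mE_-|+|\mE_+|=|\mE_-|+2|\mE_+|$, equivalently $|\mE_-|=|\mE_+|$.

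The genuinely analytic work—solving the third-order defect ODE and checking $L^2$-integrability of its exponential solutions on each half-line—is already contained in Lemma~\ref{lem:oper-1edge}, so the substance here is the additivity of deficiency indices under orthogonal direct sums. I expect the only points requiring care to be the adjoint of the possibly countably infinite direct sum (the square-summability condition in $D(A_0^*)$) and the verification that the defect kernel contains nothing beyond the edgewise solutions. For a countable graph the indices are cardinal numbers, so the sharp equivalence $n_+=n_-\Leftrightarrow|\mE_-|=|\mE_+|$ should be understood in the finite case; when both indices are infinite the cardinals coincide and extensions still exist irrespective of balancing.
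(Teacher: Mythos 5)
Your argument is correct and is precisely the one the paper intends: the paper states Proposition~\ref{prop:first} without proof, leaving it as the direct consequence of Lemma~\ref{lem:oper-1edge} via the edgewise direct-sum decomposition and additivity of deficiency indices, which is exactly what you carry out. Your closing caveat about the cardinal-arithmetic degeneracy when exactly one of $\mE_\pm$ is infinite is a valid observation that the paper itself glosses over.
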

%\textcolor{blue}{
%\begin{remark}
%The special case $|\mE_-|=|\mE_+|=1$ corresponds to the line with a singular
%interaction at the origin. It is the analogue of a point interaction well known
%in the case of self-adjoint extensions of  of $1$-dimensional laplacian operator
%out of a point.\footnote{\textcolor{orange}{Delio 5.7.2016: Honestly, I don't understand exactly what we're saying here.}}
%\end{remark}
%}

In order to avoid some technicalities we will assume that the sequences $(\alpha_\me)_{\me\in \mE}$ and $(\beta_\me)_{\me\in \mE}$ are bounded, and furthermore, that
$(\tfrac{1}{\alpha_\me})_{\me\in\mE}$ is bounded as well.

Unlike in the case of the Laplace operator, and in spite of the relevance of
related physical models, like the KdV equation, there seems to be no canonical
or natural choice of boundary conditions to impose on~\eqref{airy} on a star
graph. For this reason, we are going to characterize all boundary conditions
within certain classes.
Since~\eqref{airy} plays a role in dispersive systems in which conservation of
the initial data's norm is expected, we are going to focus on those extensions
that generate unitary groups, or isometric semigroups, or at least contractive semigroups.

\subsection{Extensions of $A_0$ generating unitary groups}

By Stone's theorem, generators of unitary groups are exactly the
skew-self-adjoint operators.
In order to determine the skew-self-adjoint extensions of $A_0$, take 
\[
u,v\in D(A_0^*)=\bigoplus_{\me\in \mE_-} H^3(-\infty,0)\oplus \bigoplus_{\me\in
\mE_+} H^3(0,+\infty).
\]

\begin{remark}
Le $u\in D(A_0^*)$.
Note that by Sobolev's Lemma and the boundedness assumption on the $(\alpha_\me)$ and $(\beta_\me)$ we obtain $(u^{(k)}_\me(0-))_{\me\in\mE_-}\in \ell^2(\mE_-)$ and 
$(u^{(k)}_\me(0+))_{\me\in\mE_+}\in \ell^2(\mE_+)$ for $k\in\{0,1,2\}$.
\end{remark}

Following the classical extension theory, cf.~\cite[Chapt.~3]{Sch12}, we write
down the boundary form to obtain
\[
\begin{split}
&(A_0^*u\mid v)+(u\mid A_0^* v)=\\
&\qquad= -\sum_{\me\in \mE_-}\int_{-\infty}^0 \left(\alpha_\me u'''_\me
+\beta_\me u'_\me \right)\overline{v_\me} \ dx - \sum_{\me\in
\mE_+}\int_0^{+\infty} \left(\alpha_\me  u'''_\me +\beta_\me u'_\me
\right)\overline{v_\me} \ dx\\
&\qquad\qquad - \sum_{\me\in \mE_-}\int_{-\infty}^0 \overline{\left(\alpha_\me 
v'''_\me +\beta_\me v'_\me \right)}u_\me \ dx - \sum_{\me\in
\mE_+}\int_0^{+\infty}\overline{ \left(\alpha_\me  v'''_\me +\beta v'_\me
\right)}u_\me \ dx\\
%%%------------%%%
&\qquad= - \sum_{\me\in \mE_-}\alpha_\me  u''_\me
(x)\overline{v_\me(x)}\Big|_{-\infty}^0 
+\sum_{\me\in \mE_-}\int_{-\infty}^0 \alpha_\me  u''_\me \overline{ v'_\me }\
dx\\
&\qquad\qquad - \sum_{\me\in \mE_-}\alpha_\me \overline{ v''_\me
(x)}u_\me(x)\Big|_{-\infty}^0 + \sum_{\me\in \mE_-}\int_{-\infty}^0 \alpha_\me 
u'_\me \overline{ v''_\me }\ dx\\
&\qquad\qquad - \sum_{\me\in \mE_+}\alpha_\me  u''_\me (x)\overline{v_\me
(x)}\Big|_0^{+\infty}
+ \sum_{\me\in \mE_+}\int_0^{+\infty} \alpha_\me  u''_\me \overline{ v'_\me }\
dx\\
&\qquad\qquad - \sum_{\me\in \mE_+}\alpha_\me \overline{ v''_\me (x)}u_\me
(x)\Big|_0^{+\infty} +\sum_{\me\in \mE_+}\int_0^{+\infty} \alpha_\me  u'_\me
\overline{ v''_\me }\ dx\\
&\qquad\qquad - \sum_{\me\in \mE_-}\int_{-\infty}^0 \beta_\me u'_\me
\overline{v_\me} \ dx
 - \sum_{\me\in \mE_-}\int_{-\infty}^0 \beta_\me\overline{ v'_\me }u_\me \ dx -
\sum_{\me\in \mE_+}\int_0^{+\infty} \beta u'_\me \overline{v_\me} \ dx 
 - \sum_{\me\in \mE_+}\int_0^{+\infty}\beta_\me\overline{ v'_\me }u_\me \ dx\\
&\qquad= - \sum_{\me\in \mE_-}\alpha_\me  u''_\me
(x)\overline{v_\me(x)}\Big|_{-\infty}^0 
+ \sum_{\me\in \mE_-}\alpha_\me  u'_\me (x)\overline{ v'_\me
(x)}\Big|_{-\infty}^0\\
&\qquad\qquad - \sum_{\me\in \mE_-}\alpha_\me \overline{ v''_\me
(x)}u_\me(x)\Big|_{-\infty}^0 + \sum_{\me\in \mE_+}\alpha_\me  u'_\me
(x)\overline{ v'_\me (x)}\Big|_0^{-\infty}\\
&\qquad\qquad - \sum_{\me\in \mE_+}\alpha_\me  u''_\me (x)\overline{v_\me
(x)}\Big|_0^{+\infty}
\\
&\qquad\qquad- \sum_{\me\in \mE_+}\alpha_\me \overline{ v''_\me (x)}u_\me
(x)\Big|_0^{+\infty}\\
&\qquad\qquad
- \sum_{\me\in \mE_-}\beta_\me  u_\me(x)\overline{ v_\me(x)}\Big|_{-\infty}^0 
- \sum_{\me\in \mE_+}\beta_\me  u_\me(x)\overline{ v_\me(x)}\Big|_0^{+\infty}\\
&\qquad= - \sum_{\me\in \mE_-}\alpha_\me  u''_\me (0)\overline{v_\me(0)} +
\sum_{\me\in \mE_+}\alpha_\me  u''_\me (0)\overline{v_\me(0)}
- \sum_{\me\in \mE_-}\alpha_\me \overline{ v''_\me (0)}u_\me(0) + \sum_{\me\in
\mE_+}\alpha_\me \overline{ v''_\me (0)}u_\me(0)\\
&\qquad\qquad - \sum_{\me\in \mE_-}\beta_\me  u_\me(0)\overline{ v_\me(0)}
+ \sum_{\me\in \mE_+}\beta_\me  u_\me(0)\overline{ v_\me(0)}
+ \sum_{\me\in \mE_-}\alpha_\me  u'_\me (0)\overline{ v'_\me (0)}- \sum_{\me\in
\mE_+}\alpha_\me  u'_\me (0)\overline{ v'_\me (0)}.
\end{split}
\]

Thus, abbreviating $u(0-):=(u_\me(0-))_{\me\in \mE_-}$ and similarly for the other
terms, and identifying with an abuse of notation $\alpha_{\pm}$ and $\beta_{\pm}$ with the corresponding
multiplication operator, i.e.
\begin{equation}\label{eq:aee}
\begin{split}
\alpha_\pm x&:=(\alpha_\me x_\me)_{\me\in \mE_\pm},\qquad x\in \ell^2(\mE_\pm),\\
\beta_\pm x&:=(\beta_\me x_\me)_{\me\in \mE_\pm},\qquad x\in \ell^2(\mE_\pm),
\end{split}
\end{equation}
we obtain
\begin{equation}\label{eq:finally}
\begin{split}
  (A_0^* u \mid v) + (u\mid A_0^*v) & = -(\alpha_- u''(0-)\mid v(0-)) +
(\alpha_+ u''(0+)\mid v(0+)) \\
  & - (\alpha_- u(0-)\mid v''(0-)) + (\alpha_+ u(0+)\mid v''(0+)) \\
  & - (\beta_- u(0-)\mid v(0-)) + (\beta_+ u(0+)\mid v(0+)) \\
  & + (\alpha_- u'(0-)\mid v'(0-)) - (\alpha_+ u'(0+)\mid v'(0+)) \\
  & = \bigg(\begin{pmatrix} -\beta_- & 0 & -\alpha_- \\ 0 & \alpha_- & 0 \\
-\alpha_- & 0 & 0\end{pmatrix}\begin{pmatrix}
u(0-)\\u'(0-)\\u''(0-)\end{pmatrix} \bigg| \begin{pmatrix}
v(0-)\\v'(0-)\\v''(0-)\end{pmatrix} \bigg)_{\mathcal{G}_-}\\
  &\quad - \bigg(\begin{pmatrix} -\beta_+ & 0 & -\alpha_+ \\ 0 & \alpha_+ & 0 \\
-\alpha_+ & 0 & 0\end{pmatrix}\begin{pmatrix}
u(0+)\\u'(0+)\\u''(0+)\end{pmatrix} \bigg| \begin{pmatrix}
v(0+)\\v'(0+)\\v''(0+)\end{pmatrix} \bigg)_{\mathcal{G}_+},
\end{split}
\end{equation}
where
\[
\mathcal{G}_-:=\ell^2(\mE_-)\oplus \ell^2(\mE_-)\oplus \ell^2(\mE_-)
\]
and
\[
\mathcal{G}_+:=\ell^2(\mE_+)\oplus\ell^2(\mE_+)\oplus\ell^2(\mE_+)\ .
\]
(We stress the difference from $\mG$, which we let denote the quantum graph.)
Consider on the graph $G(A_0^*)$ of the operator $A_0^*$ a linear and surjective
operator $F\colon G(A_0^*)\to \mathcal{G}_-\oplus \mathcal{G}_+$ defined by
%\footnote{\textcolor{orange}{Delio 19.7.2016: Would it be different defining likewise a similar operator $\tilde{F}$ on $D(A^*_0)$ only? I also have problems understanding the condition $G(A)=F^{-1}(X)$ in Theorem~\ref{thm:characskewself}, wouldn't it be sufficient imposing that $F:G(A)\to X$ (or even more simply $\tilde{F}:D(A)\to X$) is surjective?}}
\begin{equation}\label{eq:Fdef}
F((u,A_0^*u)):=
\Bigl(\bigl(u(0-),u'(0-),u''(0-)\bigr),\bigl(u(0+),u'(0+),u''(0+)\bigr) \Bigr).
\end{equation}

Following the terminology in~\cite[Examples 2.7]{SchSeiVoi15}, let $\Omega$ be the standard symmetric form on $G(A_0^*)$, i.e.
\[\Omega\bigl((u,A_0^*u),(v,A_0^*v)\bigr):= \bigg(\begin{pmatrix} 0 & 1 \\ 1 & 0
\end{pmatrix} \begin{pmatrix} u\\A_0^*u \end{pmatrix} \bigg| \begin{pmatrix}
v\\A_0^*v\end{pmatrix} \bigg)_{L^2(\mG)}, \qquad (u,A_0^*u),(v,A_0^*v)\in G(A_0^*),\]
and define a sesquilinear form $\omega\colon \mathcal{G}_-\oplus \mathcal{G}_+
\times \mathcal{G}_-\oplus \mathcal{G}_+ \to \mathbb{C}$
by
\begin{equation}\label{eq:omega}
\omega((x,y),(u,v)\bigr) := 
  \Bigg( \begin{pmatrix} B_- & 0 \\ 0 & -B_+\end{pmatrix} \begin{pmatrix}
x\\y\end{pmatrix} \Bigg| \begin{pmatrix} u\\v\end{pmatrix}\Bigg)_{ \mathcal{G}_-\oplus \mathcal{G}_+},
  \end{equation}
where $B_\pm$ is the linear block operator matrix on $\mathcal{G}_\pm$ defined by
\[B_\pm(x,x',x'') := \begin{pmatrix} -\beta_\pm & 0 & -\alpha_\pm \\ 0 &
\alpha_\pm & 0 \\ -\alpha_\pm & 0 & 0\end{pmatrix}\begin{pmatrix}
x\\x'\\x''\end{pmatrix}\ ,\qquad  (x,x',x'')\in \mathcal{G}_\pm.
\]
Observe that neither $B_+$ nor $B_-$ are definite operators.

Then~\eqref{eq:finally} can be re-written as
\begin{equation}\label{eq:omegaOmega}
\Omega\bigl((u,A_0^*u),(v,A_0^*v)\bigr) = \omega\bigl(F(u,A_0^*u),
F(v,A_0^*v)\bigr) \quad\hbox{for all }u,v\in D(A_0^*)\ .
\end{equation}

\begin{remark}
  Note that, since $\beta_\pm$,
  $\alpha_\pm$ and $\frac{1}{\alpha_\pm}$ are bounded, we have $B_\pm\in
  \mathcal{L}(\mathcal{G}_\pm)$, $B_\pm$ are injective and $B_\pm^{-1}\in \mathcal{L}(\mathcal{G}_\pm)$.
\end{remark}

Our method is based on the notion of Krein space, i.e., of a vector space
endowed with an \emph{indefinite} inner product. 

\begin{definition}
  %Let $(\mathcal{G}_\pm, (\cdot\mid \cdot))$ be Hilbert spaces. 
  Define an (indefinite) inner product $\langle\cdot\mid \cdot\rangle_\pm\colon
\mathcal{G}_\pm\times\mathcal{G}_\pm\to\mathbb{K}$ by
\[\langle x\mid y\rangle_\pm:=(B_\pm x \mid y),\qquad x,y\in \mathcal G_\pm\ .\]
\end{definition}

Then $(\mathcal{G}_\pm, \langle\cdot\mid \cdot\rangle_\pm)$ are Krein spaces and
$\langle\cdot\mid \cdot\rangle_\pm$ is non-degenerate, i.e.\ for $x\in
\mathcal{G}_\pm$ with $\langle x \mid x \rangle_\pm = 0$ it follows $x=0$.

\begin{remark}
  Let $\mathcal{K}$ be a vector space and $\langle\cdot\mid \cdot\rangle$ an (indefinite) inner product on $\mathcal{K}$ such that $(\mathcal{K}, \langle\cdot\mid \cdot\rangle)$ is a Krein space.
  Then there exists an inner product $(\cdot\mid\cdot)$ on $\mathcal{K}$ such that $(\mathcal{K},(\cdot\mid\cdot))$ is a Hilbert space.
  Notions such as closedness or continuity are then defined by the underlying Hilbert space structure.
\end{remark}

\begin{definition}\label{defi:krein}
Let $\mathcal{K}_-$, $\mathcal{K}_+$ be Krein spaces, $\omega\colon \mathcal{K}_-\oplus \mathcal{K}_+ \times \mathcal{K}_-\oplus \mathcal{K}_+\to \C$ sesquilinear.
\begin{enumerate}
\item A subspace $X$ of $\mathcal{K}_-\oplus \mathcal{K}_+$ is called
\emph{$\omega$-self-orthogonal} if
\[X=X^{\bot_\omega}:= \{(x,y)\in \mathcal{K}_-\oplus \mathcal{K}_+;\;
\omega((x,y),(u,v)) = 0\hbox{ for all }(u,v)\in X\}\ .\]
\item  Given a densely defined linear operator $L$ from $\mathcal{K}_-$ to
$\mathcal{K}_+$, its  \emph{$(\mathcal{K}_-,\mathcal{K}_+)$-adjoint}
$L^{\sharp}$ is 
\[
\begin{split}
D(L^{\sharp}) &:= \{y\in \mathcal{K}_+;\; \exists z\in \mathcal{K}_-\, \:
\langle Lx \mid y \rangle_+ = \langle x\mid z\rangle_-\hbox{ for all } x\in
D(L)\}\\
L^{\sharp}y&:=z\ .
\end{split}
\]
Clearly, $L^\sharp$ is then a linear operator from $\mathcal{K}_+$ to $\mathcal{K}_-$.
\item   A linear operator $L$ from $\mathcal{K}_-$ to $\mathcal{K}_+$ is called a
\emph{$(\mathcal{K}_-,\mathcal{K}_+)$-contraction} if
  \[\langle Lx \mid Lx \rangle_+ \leq \langle x\mid x\rangle_- \quad\hbox{for all
}x\in D(L)\ .\]

\item   A linear operator $L$ from $\mathcal{K}_-$ to $\mathcal{K}_+$ is called
\emph{$(\mathcal{K}_-,\mathcal{K}_+)$-unitary} if $D(L)$ is dense, $R(L)$ is
dense, $L$ is injective, and finally $L^{\sharp} = L^{-1}$.  
\end{enumerate}
\end{definition}
If in particular $\mathcal K_-,\mathcal K_+$ are  Hilbert spaces, then obviously $(\mathcal{K}_-,\mathcal{K}_+)$-adjoint/contraction/unitary operators are nothing but the usual objects of Hilbert space operator theory.

Note that if $L$ is a $(\mathcal{K}_-,\mathcal{K}_+)$-unitary, then
\begin{equation}\label{eq:defunit}
\langle Lx \mid Ly \rangle_+ = \langle x\mid y\rangle_- \quad\hbox{for all
}x,y\in D(L).
\end{equation}
We stress that unitary operators between Krein spaces need not be bounded.

By \cite[Corollary 2.3 and Example 2.7(b)]{SchSeiVoi15}
we can now characterize skew-self-adjoint extensions $A$
of $A_0$ -- i.e., skew-self-adjoint restrictions of $A_0^*$ -- and therefore
self--adjoint extensions of $iA_0$. 
%This is based on the notion of self-orthogonal subspace of a Krein space.

\begin{theorem}\label{thm:characskewself}
An extension $A$ of $A_0$ is skew-self-adjoint if and only if there exists an
$\omega$-self-orthogonal subspace $X\subseteq \mathcal{G}_-\oplus \mathcal{G}_+$
for which $G(A) = F^{-1}(X)$, where $F$ is the operator defined in~\eqref{eq:Fdef} and $G(A)$ is the graph of $A$.
\end{theorem}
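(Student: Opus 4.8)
The plan is to deduce the statement from the abstract Krein-space extension theorem of \cite[Corollary~2.3 and Example~2.7(b)]{SchSeiVoi15}, so that the only real work is to check that our concrete data fit its hypotheses. That framework characterises the skew-self-adjoint restrictions of the maximal operator $A_0^*$ --- equivalently, by Stone's theorem, the self-adjoint extensions of $iA_0$ --- as precisely the operators whose domain is cut out, through the boundary map $F$ of \eqref{eq:Fdef}, by an $\omega$-self-orthogonal subspace of $\mathcal{G}_-\oplus\mathcal{G}_+$. To invoke it one needs three things: the abstract Green identity $\Omega=\omega\circ(F\times F)$ of \eqref{eq:omegaOmega}; surjectivity of $F$, whose kernel consists exactly of the pairs $(u,A_0^*u)$ with vanishing boundary data $u_\me(0)=u_\me'(0)=u_\me''(0)=0$, i.e.\ $u\in D(\overline{A_0})$; and non-degeneracy of the form $\omega$ of \eqref{eq:omega}. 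The identity \eqref{eq:omegaOmega} was already established, so I would concentrate on the remaining two points.

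First I would verify surjectivity of $F$ onto $\mathcal{G}_-\oplus\mathcal{G}_+$. Given arbitrary target data, namely three $\ell^2(\mE_\pm)$-sequences prescribing $(u_\me(0),u_\me'(0),u_\me''(0))_{\me}$, I would realise them edge by edge: fix once and for all one smooth compactly supported profile on $(-\infty,0)$ and one on $(0,+\infty)$ whose values and first two derivatives at the vertex span $\C^3$, and put on each edge the unique linear combination of rescaled copies of these profiles producing the prescribed triple. The delicate point is uniformity over the countably many edges: the profiles must be chosen so that $\|u_\me\|_{H^3}$ is controlled by the Euclidean norm of the prescribed triple, whence $\sum_\me\|u_\me\|_{H^3}^2<\infty$ as soon as the data are in $\ell^2$. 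Here the standing boundedness of $(\alpha_\me)$, $(\beta_\me)$ and $(1/\alpha_\me)$ is used to guarantee in addition that $A_0^*u\in L^2(\mG)$, so that the glued function lies in $D(A_0^*)$ and $F$ is onto; the kernel identification is immediate from the description of $\overline{A_0}$ as the $H^3$-closure of $C_c^\infty$.

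Finally I would note that $\omega$ is non-degenerate: by the remark after \eqref{eq:omega} the blocks $B_\pm$ are bounded, injective and boundedly invertible, hence so is their Gram operator $\mathrm{diag}(B_-,-B_+)$. With \eqref{eq:omegaOmega}, surjectivity of $F$ and non-degeneracy of $\omega$ in hand, \cite[Corollary~2.3]{SchSeiVoi15} applies and transports $\Omega$-self-orthogonality along the surjection $F$: a restriction $A$ of $-A_0^*$ extending $A_0$ is skew-self-adjoint exactly when $X:=\{F(u,A_0^*u):u\in D(A)\}$ satisfies $X=X^{\bot_\omega}$, and conversely each $\omega$-self-orthogonal $X$ determines such an $A$ via the domain $D(A)=\{u\in D(A_0^*):F(u,A_0^*u)\in X\}$, which is the asserted identity $G(A)=F^{-1}(X)$ (with the usual identification of $G(A)$ with its domain, $A$ acting as $-A_0^*$). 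The main obstacle is the surjectivity step in the infinite-dimensional regime --- producing boundary-data-to-function lifts whose $H^3$-norms are square-summable across edges and uniformly bounded by the $\ell^2$-norm of the data; once this uniform control is secured the rest is the purely formal translation between $A=-A^*$ and $X=X^{\bot_\omega}$ performed by the cited corollary.
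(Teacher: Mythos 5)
Your proposal follows exactly the route of the paper: the paper also obtains this theorem by citing \cite[Corollary~2.3 and Example~2.7(b)]{SchSeiVoi15} after establishing the boundary identity~\eqref{eq:omegaOmega} and asserting surjectivity of the boundary map $F$. Your additional verification of the hypotheses (the edge-wise $H^3$-controlled lift showing $F$ is onto, and non-degeneracy of $\omega$ via the bounded invertibility of $B_\pm$) is sound and in fact supplies details the paper leaves implicit.
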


Hence, $\omega$-self-orthogonal subspaces $X$ parametrize the skew-self-adjoint
extensions $A$ of $A_0$. A more explicit description of these objects is given
next.

\begin{theorem}\label{thm:characselforthog}
  Let $X\subseteq \mathcal{G}_-\oplus \mathcal{G}_+$ be a subspace.
  Then $X$ is $\omega$-self-orthogonal if and only if there exists a
$(\mathcal{G}_-,\mathcal{G}_+)$-unitary operator $L$ such that $X = G(L)$.
\end{theorem}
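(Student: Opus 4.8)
The plan is to prove the two implications separately, treating the converse (that an $\omega$-self-orthogonal $X$ is the graph of a $(\mathcal{G}_-,\mathcal{G}_+)$-unitary) as the substantial half. For the direction where $X=G(L)$ with $L$ a $(\mathcal{G}_-,\mathcal{G}_+)$-unitary, I would argue directly from the definitions. Expanding \eqref{eq:omega}, for $x,x'\in D(L)$ one has $\omega\bigl((x,Lx),(x',Lx')\bigr)=\langle x\mid x'\rangle_- -\langle Lx\mid Lx'\rangle_+$, which vanishes by the isometry identity \eqref{eq:defunit}; hence $G(L)\subseteq G(L)^{\perp_\omega}$, i.e.\ the graph is $\omega$-neutral. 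For the reverse inclusion take $(a,b)\in G(L)^{\perp_\omega}$; then $\langle Lx\mid b\rangle_+=\langle x\mid a\rangle_-$ for every $x\in D(L)$, which by Definition~\ref{defi:krein}(2) says exactly that $b\in D(L^\sharp)$ and $L^\sharp b=a$. Since $L$ is unitary, $L^\sharp=L^{-1}$, so $b\in R(L)$ and $a=L^{-1}b$, i.e.\ $(a,b)=(a,La)\in G(L)$. Thus $G(L)^{\perp_\omega}\subseteq G(L)$ and equality holds. Closedness of $G(L)$, needed in the infinite-dimensional case, is automatic since $L=(L^\sharp)^{-1}$ is the inverse of a Krein-space adjoint and hence closed.

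Conversely, suppose $X=X^{\perp_\omega}$. I would first reduce the unitarity of the putative $L$ to structural properties of $X$, and then isolate the genuinely hard point. Assume for the moment that $X$ is the graph of a densely defined operator $L$ with dense range. Then $\omega$-neutrality $X\subseteq X^{\perp_\omega}$ gives, exactly as above, $\langle Lx\mid Lx'\rangle_+=\langle x\mid x'\rangle_-$ for all $x,x'\in D(L)$, so that $L$ is the required isometric intertwiner; injectivity follows because $Lx=0$ forces $\langle x\mid x'\rangle_-=0$ for all $x'$ in the dense set $D(L)$, and $\langle\cdot\mid\cdot\rangle_-$ is non-degenerate. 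Finally the maximality $X^{\perp_\omega}\subseteq X$ upgrades this to $L^\sharp=L^{-1}$: the equivalence $(a,b)\in X^{\perp_\omega}\Leftrightarrow L^\sharp b=a$ (again Definition~\ref{defi:krein}(2)) combined with $(a,b)\in X=G(L)\Leftrightarrow b=La$ yields $La=b\Leftrightarrow L^\sharp b=a$, which is precisely $L^\sharp=L^{-1}$. Hence $L$ is $(\mathcal{G}_-,\mathcal{G}_+)$-unitary.

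The main obstacle is the reduction just assumed: that an $\omega$-self-orthogonal $X$ is in fact the graph of a densely defined, densely ranged operator, and above all that it is single-valued, $(0,y)\in X\Rightarrow y=0$. This is where bare neutrality of $X$ is insufficient and the precise geometry of $\omega$ must enter, since an $\omega$-neutral vector of the form $(0,y)$ with $y$ $\langle\cdot\mid\cdot\rangle_+$-neutral is $\omega$-isotropic and a priori need not be excluded. I would attack this by reformulating the condition metrically: writing $\tilde B$ for the bounded, boundedly invertible, self-adjoint operator on $\mathcal{G}_-\oplus\mathcal{G}_+$ with $\omega(\cdot,\cdot)=(\tilde B\,\cdot\mid\cdot)$ from \eqref{eq:omega}, one has $X^{\perp_\omega}=\tilde B^{-1}(X^{\perp})$ with $X^{\perp}$ the Hilbert-space complement, so that $X=X^{\perp_\omega}$ becomes $\tilde B X=X^{\perp}$; the single-valuedness and density statements should then be read off from this relation, using the non-degeneracy of the $\langle\cdot\mid\cdot\rangle_\pm$ and the balance of positive and negative ranks that holds exactly when $|\mE_-|=|\mE_+|$. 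Here I expect to lean on \cite[Corollary 2.3 and Example 2.7(b)]{SchSeiVoi15}, whose abstract boundary-form machinery is built precisely to match $\omega$-self-orthogonal subspaces with Krein-unitary boundary operators. This is the step I would budget the most care for, as it is exactly the point where the indefinite geometry, rather than formal manipulation, does the work.
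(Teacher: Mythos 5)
Your ``if'' direction and the formal skeleton of your ``only if'' direction (isometry from $\omega$-neutrality, injectivity from the isometry plus non-degeneracy, $L^{\sharp}=L^{-1}$ from the maximality $X^{\bot_\omega}\subseteq X$, and the identification of $X^{\bot_\omega}$ with the flipped graph of $L^{\sharp}$) coincide with the paper's own argument, so up to the reduction you explicitly defer the two proofs are the same.

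That deferred reduction --- single-valuedness $(0,y)\in X\Rightarrow y=0$ together with density of domain and range --- is a genuine gap, and it is not one that ``more care'' or the cited \cite{SchSeiVoi15} will close. The paper dispatches it in one line: for $(0,y)\in X$ self-orthogonality gives $\langle y\mid y\rangle_+=0$, and the paper then invokes its stated property that $\langle y\mid y\rangle_+=0$ forces $y=0$. Your instinct that bare neutrality is insufficient here is exactly right, because that property fails for the indefinite forms at hand: $B_\pm(0,0,x'')=(-\alpha_\pm x'',0,0)$, so every vector of the form $(0,0,x'')$ is $\langle\cdot\mid\cdot\rangle_\pm$-neutral, and likewise $(x,0,0)$ is neutral when $\beta_\pm=0$. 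Your proposed metric reformulation $\tilde BX=X^{\perp}$ and the rank balance $\abs{\mE_-}=\abs{\mE_+}$ do not rescue the step either. Concretely, take $\abs{\mE_-}=\abs{\mE_+}=1$, $\alpha_\pm=1$, $\beta_\pm=0$, so $\mathcal G_\pm=\C^3$, and consider $X=\lin\{((1,0,0),(0,0,0)),\;((0,0,0),(1,0,0)),\;((0,1,0),(0,1,0))\}\subseteq\mathcal G_-\oplus\mathcal G_+$. A direct check shows $X$ is totally $\omega$-neutral; since $\omega$ is non-degenerate, $\dim X^{\bot_\omega}=6-\dim X=3$, hence $X=X^{\bot_\omega}$ is $\omega$-self-orthogonal. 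Yet $X$ contains $((0,0,0),(1,0,0))$ and $((1,0,0),(0,0,0))$, so it is the graph of no operator $\mathcal G_-\to\mathcal G_+$, injective or otherwise. The reduction you assumed therefore cannot be established in the stated generality: one must either add the hypothesis $X\cap(\{0\}\oplus\mathcal G_+)=\{0\}=X\cap(\mathcal G_-\oplus\{0\})$ (and separately argue density), or work with forms admitting no isotropic vectors. Your proposal is incomplete at exactly the point where the paper's proof leans on a property its own $B_\pm$ do not possess, and you should flag this rather than expect the abstract boundary-system machinery to supply the missing single-valuedness.
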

% \footnote{\textcolor{orange}{Delio 13.7.2016: Christian, I have several questions:
% \begin{itemize}
% \item $ \mathcal{G}_-\oplus \mathcal{G}_+$ is a Krein space: we should specify what we mean by a ``closed subspace''. Or are we taking the notion of closure with respect to the underlying Hilbert space structure?
% \item Are orthogonal subspaces with respect to an indefinite inner product always closed? Hence, is an $\omega$-self-orthogonal subspace always closed? 
% \item If so, does the characterization in Theorem~\ref{thm:characselforthog} hold for all subspaces or only for closed subspaces?
% \end{itemize}
% }}

\begin{proof}
  First, let $X$ be $\omega$-self-orthogonal.
  For $(x,y)\in X$ we have
  \[\omega\bigl((x,y),(x,y)\bigr) =0,\]
  i.e.
  \[\langle y \mid y \rangle_+ = \langle x\mid x\rangle_-.\]
  For the special case $x=0$ we obtain $\langle y\mid y\rangle_+ = 0$. Since
$\langle \cdot\mid\cdot\rangle_+$ is non-degenerate, 
  it follows that $y=0$.
  Thus, $X$ is the graph of a linear operator $L$ from $\mathcal{G}_-$ to
$\mathcal{G}_+$.
  For the special case $y=0$ we obtain $\langle x\mid x\rangle_- = 0$. Since
$\langle \cdot\mid\cdot\rangle_-$ is non-degenerate, 
  it follows that $x=0$. Thus, $L$ is injective.
  
  For $x,y\in D(L)$, we have $(x,Lx),(y,Ly)\in X$, so
  \[\omega\bigl((x,Lx),(y,Ly)\bigr) = 0,\]
  i.e.
  \[\langle Lx \mid Ly \rangle_+ = \langle x\mid y\rangle_-.\]
  Let $x\in D(L)^\bot$. Then
  \[(x\mid y) = 0 \quad(y\in D(L)).\]
  Thus,
  \[0 = (x\mid y) = (B_- B_-^{-1}x \mid y) = \langle B_-^{-1}x \mid y\rangle_- =
\omega\bigl((B_-^{-1}x,0),(y,Ly)\bigr) \quad(y\in D(L)).\]
  Hence, $(B_-^{-1}x,0) \in X^{\bot_\omega} = X = G(L)$, so $B_-^{-1}x = 0$, and
therefore $x=0$. Thus, $L$ is densely defined. Similarly, we obtain that $R(L)$
is dense.
  
  For $x\in D(L)$, $z\in R(L)$ we have
  \[\langle Lx \mid z \rangle_+ = \langle x\mid L^{-1}z \rangle_-.\]
  Thus, $R(L)\subseteq L^{\sharp}$, and $L^{\sharp} z = L^{-1}z$ for all $z\in
R(L)$, i.e.\ $L^{-1}\subseteq L^{\sharp}$.
  Let $(y,x)\in G(L^{\sharp})$. Then
  \[\langle Lz \mid y \rangle_+ = \langle z\mid x \rangle_- \quad(z\in D(L)),\]
  i.e.
  \[\omega\bigl((z,Lz), (x,y)\bigr) = 0 \quad(z\in D(L)).\]
  Hence, $(x,y)\in X^{\bot_\omega} = X = G(L)$, and therefore $(y,x)\in
G(L^{-1})$. Therefore, $L^{\sharp} = L^{-1}$, so $L$ is a
$(\mathcal{G}_-,\mathcal{G}_+)$-unitary.
    
  \medskip
  
  Let $L$ be a $(\mathcal{G}_-,\mathcal{G}_+)$-unitary operator from
$\mathcal{G}_-$ to $\mathcal{G}_+$, $X=G(L)$. Let $x\in D(L)$. Then
  \[\langle Lx \mid Ly \rangle_+ = \langle x\mid y\rangle_- \quad(y\in D(L)),\]
  i.e.
  \[\omega\bigl((x,Lx), (y,Ly)\bigr) = 0 \quad(y\in D(L)).\]
  Thus, $(x,Lx)\in X^{\bot_\omega}$, and therefore $X\subseteq
X^{\bot_\omega}$. 
  
  Let now $(z,y)\in X^{\bot_\omega}$. Then, for $x\in D(L)$, we have
  \[\omega\bigl((z,y), (x,Lx)\bigr) = 0,\]
  and thus
  \[\langle Lx \mid y \rangle_+ = \langle x\mid z\rangle_-.\]
  By definition of $L^{\sharp}$, we obtain $y\in D(L^{\sharp})$ and $L^{\sharp}y
= z$. Since $L^\sharp = L^{-1}$, we find $(z,y)\in G(L) = X$. 
  Hence, $X^{\bot_\omega} \subseteq X$.
  
  Combining both parts, we see that $X$ is $\omega$-self-orthogonal.
\end{proof}

\begin{remark}\label{rem:mass}
If Theorem~\ref{thm:characskewself} applies, then Stone's theorem immediately yields that the Airy equation~\eqref{airy} on the quantum star graph $\mG$ is governed by a unitary group acting on $L^2(\mG)$, hence it has a unique solution $u\in C^1(\mathbb R;L^2(\mG))\cap C(\mathbb R;D(A))$ that continuously depends on the initial data $u_0\in L^2(\mG)$. 
Because the group is unitary, the momentum $\|u\|^2_{L^2(\mG)}$ is conserved as soon as we can apply Theorem~\ref{thm:characskewself}.
By the above computation we also see that
\begin{equation}\label{eq:mass}
\begin{split}
\frac{\partial }{\partial t}\int_{\mG} u(t,x)dx&=
\sum_{\me\in \mE_-}\int_{-\infty}^0 \alpha_\me u'''_\me (t,x)+\beta_\me u'_\me(t,x)dx+
\sum_{\me\in \mE_-}\int_0^{\infty}\alpha_\me u'''_\me (t,x)+\beta_\me u'_\me(t,x)dx\\
&=  \sum_{\me\in \mE_-}\alpha_\me  u''_\me
(0-)-  \sum_{\me\in \mE_+}\alpha_\me  u''_\me (0+)+ \sum_{\me\in \mE_-}\beta_\me  u_\me(0-)- \sum_{\me\in \mE_+}\beta_\me  u_\me(0+)\ .
\end{split}
\end{equation}
In other words, the solution of the system enjoys conservation of mass -- just like the solution to the classical Airy equation on $\mathbb R$ -- if and only if additionally
\begin{equation}\label{eq:yesmass}
\sum_{\me\in\mE_-} \alpha_\me u_\me''(0-) - \sum_{\me\in\mE_+} \alpha_\me u_\me''(0+) + \sum_{\me\in\mE_-} \beta_\me u_\me(0-) - \sum_{\me\in\mE_+} \beta_\me u_\me (0+) = 0\ .
\end{equation}
\end{remark}

\begin{remark}
  Note that, since $\beta_\pm, \alpha_\pm,\frac{1}{\alpha_\pm}$ are bounded,
the form $\omega$ introduced in~\eqref{eq:omega} is continuous. Thus,
$\omega$-self-orthogonal subspaces are closed, so the corresponding
$(\mathcal{G}_-,\mathcal{G}_+)$-unitary $L$ is closed.
  We do not know whether $L$ is in fact continuous (this holds true in Hilbert
spaces, but we are not aware of any argument in Krein spaces).
\end{remark}

\begin{remark}
As already remarked at the beginning of this section, star graphs can be seen as generic building blocks of quantum graphs. Apart from their interest in scattering theory, star graphs whose edges are semi-infinite still display all relevant features for the purpose of studying extensions of operators on compact graphs. 
Indeed, our analysis is essentially of variational nature and therefore it only depends on the orientation of the edges and the boundary values of a function on the graph. It is therefore clear that analogous results could be formulated for graphs that include edges of finite length, too, like the \textit{flower graph} depicted in Figure~\ref{fig:flower}.
\begin{figure}
\begin{tikzpicture}[scale=0.7]
  \begin{polaraxis}[grid=none, axis lines=none]
     \addplot[mark=none,domain=0:360,samples=300] { abs(cos(7*x/2))};
   \end{polaraxis}
   \draw[fill] (3.42,3.42) circle (2pt);
 \end{tikzpicture}
\caption{A flower graph on seven edges.}
\label{fig:flower}
\end{figure}
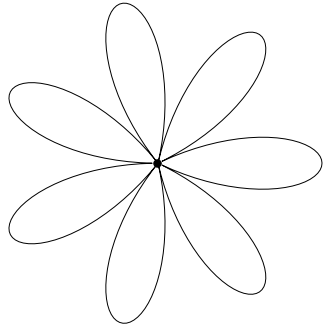
Clearly, an interesting feature of flower graphs is that they are automatically balanced, i.e., the number of incoming and outgoing edges from the (only) vertes is equal: accordingly, the operator $A_0$ on a flower graph always admits skew-self-adjoint extensions.
\end{remark}

\subsection{Extensions of $A_0$ generating contraction semigroups}

Let $A$ be an extension of $A_0$ such that $A\subseteq -A_0^*$.
We focus on generating contraction semigroups. By the Lumer-Phillips Theorem and corollaries of it we have to show that $A$ and $A^*$ are dissipative.
Since we are dealing with Hilbert spaces, $A$ is dissipative if and only if $\Re (Au\mid u)\leq 0$ for all $u\in D(A)$, and analogously for $A^*$. 
Let $L$ be a densely defined linear operator from $\mathcal{G}_-$ to $\mathcal{G}_+$. Then we define $A_L\subseteq -A_0^*$ by 
\[G(A_L) = F^{-1}(G(L)),\]
i.e.
\begin{align*}
  D(A_L) & = \Bigl\{u\in D(A_0^*);\; \bigl(u(0-),u'(0-),u''(0-)\bigr)\in D(L), \\
  & \qquad\qquad\qquad\qquad L\bigl(u(0-),u'(0-),u''(0-)\bigr) = \bigl(u(0+),u'(0+),u''(0+)\bigr)\Bigr\},\\
  A_L u & = -A_0^*u.
\end{align*}

\begin{lemma}
\label{lem:A_L dissipative}
  Let $L$ be a densely defined linear operator from $\mathcal{G}_-$ to $\mathcal{G}_+$. Then $A_L$ is dissipative if and only if
  $L$ is a $(\mathcal{G}_-,\mathcal{G}_+)$-contraction.
\end{lemma}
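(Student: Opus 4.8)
The plan is to read dissipativity directly off the boundary form identity already established in~\eqref{eq:finally}, recast through~\eqref{eq:omegaOmega}. First I would specialize the boundary form to the diagonal $v=u$. Setting $v=u$ in~\eqref{eq:omegaOmega} gives
\[
(A_0^* u \mid u) + (u \mid A_0^* u) = \omega\bigl(F(u,A_0^*u), F(u,A_0^*u)\bigr),
\]
whose left-hand side equals $2\Re(A_0^* u\mid u)$. Since $A_L u = -A_0^* u$ by definition, this yields $2\Re(A_L u\mid u) = -\,\omega\bigl(F(u,A_0^*u), F(u,A_0^*u)\bigr)$ for every $u\in D(A_L)$.

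Next I would evaluate the right-hand side explicitly. By the block-diagonal form of $\omega$ in~\eqref{eq:omega} together with the definition of the Krein inner products $\langle\cdot\mid\cdot\rangle_\pm$, writing $x:=\bigl(u(0-),u'(0-),u''(0-)\bigr)$ one obtains
\[
\omega\bigl(F(u,A_0^*u), F(u,A_0^*u)\bigr) = \langle x\mid x\rangle_- - \langle Lx\mid Lx\rangle_+,
\]
where I have used that membership $u\in D(A_L)$ forces $\bigl(u(0+),u'(0+),u''(0+)\bigr) = Lx$. Hence $2\Re(A_L u\mid u) = \langle Lx\mid Lx\rangle_+ - \langle x\mid x\rangle_-$, so $A_L$ is dissipative precisely when $\langle Lx\mid Lx\rangle_+ \le \langle x\mid x\rangle_-$ holds for every $u\in D(A_L)$.

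To finish I would match the quantifier over $u\in D(A_L)$ with the quantifier over $x\in D(L)$ appearing in the definition of a $(\mathcal{G}_-,\mathcal{G}_+)$-contraction. Since $F$ is surjective, for every $x\in D(L)$ there exists $u\in D(A_0^*)$ with $F(u,A_0^*u) = (x,Lx)\in G(L)$, and such $u$ lies in $D(A_L)=F^{-1}(G(L))$; conversely, every $u\in D(A_L)$ produces via $x=\bigl(u(0-),u'(0-),u''(0-)\bigr)$ an element $x\in D(L)$ with second component $Lx$. Thus the inequality ranges over exactly the same set of vectors in both formulations, and the dissipativity condition for $A_L$ and the contraction condition for $L$ coincide.

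I expect the only delicate point to be this last bookkeeping step: one must verify that the boundary map $u\mapsto x$ restricted to $D(A_L)$ has range equal to all of $D(L)$, which is exactly where the surjectivity of $F$ (and the defining relation $G(A_L)=F^{-1}(G(L))$) enters. The algebraic manipulations with $\omega$ and the sign arising from $A_L=-A_0^*$ are routine once the identity~\eqref{eq:omegaOmega} is in hand.
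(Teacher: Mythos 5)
Your argument is correct and follows essentially the same route as the paper: both compute $-2\Re(A_L u\mid u)=\Omega\bigl((u,-A_Lu),(u,-A_Lu)\bigr)=\langle x\mid x\rangle_- - \langle Lx\mid Lx\rangle_+$ with $x=\bigl(u(0-),u'(0-),u''(0-)\bigr)$ and then observe that the boundary map carries $D(A_L)$ onto all of $D(L)$ (the paper states this as $D(L)=\{(u(0-),u'(0-),u''(0-));\,u\in D(A_L)\}$, which rests on the same surjectivity of $F$ you invoke). The sign bookkeeping and the quantifier matching you flag as the delicate points are handled identically in the paper's proof.
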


\begin{proof}
  Let $u\in D(A_L)$. Then
  \begin{align*}
    -2\Re (A_L u \mid u) & = \Omega\bigl((u,-A_Lu),(u,-A_Lu)\bigr) \\
    & = \bigl\langle \bigl(u(0-),u'(0-),u''(0-)\bigr), \bigl(u(0-),u'(0-),u''(0-)\bigr)\bigr\rangle_- \\
    & - \bigl\langle L\bigl( u(0-),u'(0-),u''(0-)\bigr), L\bigl(u(0-),u'(0-),u''(0-)\bigr)\bigr\rangle_+.
  \end{align*}
  Thus, $A_L$ is dissipative if and only if
  \begin{align*}
    & \bigl\langle L\bigl( u(0-),u'(0-),u''(0-)\bigr), L\bigl(u(0-),u'(0-),u''(0-)\bigr)\bigr\rangle_+ \\
    &\qquad \leq \bigl\langle \bigl(u(0-),u'(0-),u''(0-)\bigr), \bigl(u(0-),u'(0-),u''(0-)\bigr)\bigr\rangle_-  \quad\hbox{for all }u\in D(A_L).
  \end{align*}
  By definition of $A_L$ we have $D(L) = \bigl\{\bigl(u(0-),u'(0-),u''(0-)\bigr); u\in D(A_L)\bigr\}$, so the assertion follows.
\end{proof}

Analogously, we obtain a characterisation for dissipativity of the adjoint $A_L^*$ of $A_L$.
Here and in the following, $L^\sharp$ denotes the $(\mathcal{G}_-,\mathcal{G}_+)$-adjoint of the operator $L$, cf.\ Definition~\ref{defi:krein}.

\begin{lemma}
  Let $L$ be a densely defined linear operator from $\mathcal{G}_-$ to $\mathcal{G}_+$. Then
  \begin{align*}
    D(A_L^*) & = \Bigl\{u\in D(A_0^*);\; \bigl(u(0+),u'(0+),u''(0+)\bigr) \in D(L^\sharp), \\
    & \qquad\qquad\qquad\qquad L^\sharp \bigl(u(0+),u'(0+),u''(0+)\bigr) = \bigl(u(0-),u'(0-),u''(0-)\bigr)\Bigr\},\\
    A_L^* u & = A_0^* u.
  \end{align*}
\end{lemma}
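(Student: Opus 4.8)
The plan is to compute $A_L^*$ directly from the definition of the adjoint, turning the defining bilinear identity into the boundary form~\eqref{eq:omegaOmega} and then reading off the boundary conditions from the defining property of the Krein-space adjoint $L^\sharp$. For $u\in D(A_0^*)$ I abbreviate the boundary data by $\gamma_-(u):=(u(0-),u'(0-),u''(0-))\in\mathcal G_-$ and $\gamma_+(u):=(u(0+),u'(0+),u''(0+))\in\mathcal G_+$, so that $F(u,A_0^*u)=(\gamma_-(u),\gamma_+(u))$, and $u\in D(A_L)$ means exactly $\gamma_-(u)\in D(L)$ with $\gamma_+(u)=L\gamma_-(u)$.

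First I would fix the action of $A_L^*$ and the regularity of its domain elements. Since every $u\in D(A_0)$ has vanishing boundary data, it lies in $D(A_L)$ with $A_Lu=-A_0^*u=A_0u$; hence $A_0\subseteq A_L$ (in particular $A_L$ is densely defined, so $A_L^*$ is well defined), and taking adjoints reverses the inclusion to give $A_L^*\subseteq A_0^*$. Consequently every $v\in D(A_L^*)$ already satisfies $v\in D(A_0^*)$ and $A_L^*v=A_0^*v$, which is the asserted formula for the action. It then remains only to decide which $v\in D(A_0^*)$ belong to $D(A_L^*)$.

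For such a $v$, membership $v\in D(A_L^*)$ is equivalent to $(A_Lu\mid v)=(u\mid A_0^*v)$ for all $u\in D(A_L)$. Substituting $A_Lu=-A_0^*u$ and invoking~\eqref{eq:omegaOmega} together with~\eqref{eq:omega}, this becomes
\begin{align*}
0 &= (A_0^*u\mid v)+(u\mid A_0^*v)=\omega\bigl(F(u,A_0^*u),F(v,A_0^*v)\bigr) \\
  &= \langle \gamma_-(u)\mid\gamma_-(v)\rangle_- -\langle \gamma_+(u)\mid\gamma_+(v)\rangle_+\qquad\text{for all }u\in D(A_L).
\end{align*}
Because $\gamma_+(u)=L\gamma_-(u)$ on $D(A_L)$, this is precisely the requirement
\[
\langle L\gamma_-(u)\mid\gamma_+(v)\rangle_+=\langle \gamma_-(u)\mid\gamma_-(v)\rangle_-\qquad\text{for all }u\in D(A_L),
\]
and, comparing with Definition~\ref{defi:krein}(2), this says exactly that $\gamma_+(v)\in D(L^\sharp)$ and $L^\sharp\gamma_+(v)=\gamma_-(v)$ — the claimed domain description — as soon as $\gamma_-(u)$ exhausts all of $D(L)$ when $u$ ranges over $D(A_L)$.

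That last point is the only step needing care, and it is where the surjectivity of $F$ is used. Given any $\xi\in D(L)$, surjectivity of $F\colon G(A_0^*)\to\mathcal G_-\oplus\mathcal G_+$ produces $u\in D(A_0^*)$ with $\gamma_-(u)=\xi$ and $\gamma_+(u)=L\xi$, whence $u\in D(A_L)$; thus $\{\gamma_-(u):u\in D(A_L)\}=D(L)$, and the test above is genuinely over all of $D(L)$. With this, the equivalence runs in both directions: if $v\in D(A_L^*)$ the identity forces $\gamma_+(v)\in D(L^\sharp)$ with $L^\sharp\gamma_+(v)=\gamma_-(v)$, where non-degeneracy of $\langle\cdot\mid\cdot\rangle_-$ (i.e.\ invertibility of $B_-$) guarantees uniqueness of $\gamma_-(v)$; conversely, if $v\in D(A_0^*)$ satisfies these two conditions, the displayed identity holds for every $\xi\in D(L)$, the boundary form vanishes on $D(A_L)$, and so $v\in D(A_L^*)$. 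I expect the only genuine obstacle to be the bookkeeping in the countable-edge case, but the boundedness of $(\alpha_\me),(\beta_\me),(1/\alpha_\me)$ keeps all boundary data in $\mathcal G_\pm$ and underlies the surjectivity of $F$, so no further difficulty arises.
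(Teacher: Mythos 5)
Your proof is correct and follows essentially the same route as the paper: reduce the adjoint condition to the vanishing of the boundary form $\Omega=\omega\circ(F\times F)$ on $D(A_L)$ and read off the defining property of $L^\sharp$ from Definition~\ref{defi:krein}(2). You additionally make explicit two points the paper leaves implicit — that $A_0\subseteq A_L$ forces $A_L^*\subseteq A_0^*$, and that surjectivity of $F$ ensures the traces $\gamma_-(u)$, $u\in D(A_L)$, exhaust $D(L)$ — which is a welcome tightening rather than a deviation.
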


\begin{proof}
  Let $u\in D(A_L)$, $v\in D(A_0^*)$. Then
  \begin{align*}
    (A_L u \mid v) & = (u \mid A_0^*v) - \bigl\langle \bigl(u(0-),u'(0-),u''(0-)\bigr) \mid \bigl(v(0+),v'(0+),v''(0+)\bigr)\bigr\rangle_- \\
    & + \bigl\langle L\bigl(u(0-),u'(0-),u''(0-)\bigr),\bigl(v(0+),v'(0+),v''(0+)\bigr)\bigr\rangle_+.
  \end{align*}
  Hence, $v\in D(A_L^*)$ if and only if $\bigl(v(0+),v'(0+),v''(0+)\bigr)\in D(L^\sharp)$ and 
  \[L^\sharp\bigl(v(0+),v'(0+),v''(0+)\bigr) = \bigl(v(0-),v'(0-),v''(0-)\bigr),\]
  and then $A_L^* v = A_0^* v$.  
\end{proof}

\begin{lemma}
\label{lem:A_L^*_dissipative}
  Let $L$ be a densely defined linear operator from $\mathcal{G}_-$ to $\mathcal{G}_+$. Then $A_L^*$ is dissipative if and only if
  $L^\sharp$ is a $(\mathcal{G}_+,\mathcal{G}_-)$-contraction.
\end{lemma}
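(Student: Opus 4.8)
The plan is to mirror the proof of Lemma~\ref{lem:A_L dissipative}, now using the description of $A_L^*$ obtained in the preceding lemma. For $u\in D(A_0^*)$ abbreviate $x := \bigl(u(0-),u'(0-),u''(0-)\bigr)\in\mathcal{G}_-$ and $y := \bigl(u(0+),u'(0+),u''(0+)\bigr)\in\mathcal{G}_+$. Recall that $A_L^* u = A_0^* u$ and that $u\in D(A_L^*)$ holds precisely when $y\in D(L^\sharp)$ and $L^\sharp y = x$. Since we are in a Hilbert space, $A_L^*$ is dissipative if and only if $\Re(A_L^* u\mid u)\le 0$ for all $u\in D(A_L^*)$, so the task reduces to evaluating this quadratic form.

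First I would compute the form. Because $A_L^* u = A_0^* u$, we have $2\Re(A_L^* u\mid u) = \Omega\bigl((u,A_0^*u),(u,A_0^*u)\bigr)$, and combining~\eqref{eq:omegaOmega} with the explicit expression~\eqref{eq:omega} for $\omega$ yields
\[2\Re(A_L^* u\mid u) = \langle x\mid x\rangle_- - \langle y\mid y\rangle_+ .\]
Next I would substitute the boundary condition $x = L^\sharp y$ valid on $D(A_L^*)$, obtaining
\[2\Re(A_L^* u\mid u) = \langle L^\sharp y\mid L^\sharp y\rangle_- - \langle y\mid y\rangle_+ .\]
Hence $A_L^*$ is dissipative if and only if $\langle L^\sharp y\mid L^\sharp y\rangle_-\le\langle y\mid y\rangle_+$ holds for every $y$ arising as the $(0+)$-boundary data of some $u\in D(A_L^*)$.

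It then remains to identify this collection of boundary values with all of $D(L^\sharp)$. The set $\bigl\{y\; ;\; u\in D(A_L^*)\bigr\}$ equals $D(L^\sharp)$: given any $y\in D(L^\sharp)$, surjectivity of $F$ onto $\mathcal{G}_-\oplus\mathcal{G}_+$ supplies some $u\in D(A_0^*)$ with $F(u,A_0^*u) = (L^\sharp y, y)$, and this $u$ then lies in $D(A_L^*)$ and realises $y$. With this identification, the inequality above holding for all $u\in D(A_L^*)$ is exactly the condition $\langle L^\sharp y\mid L^\sharp y\rangle_-\le\langle y\mid y\rangle_+$ for all $y\in D(L^\sharp)$, i.e.\ that $L^\sharp$ is a $(\mathcal{G}_+,\mathcal{G}_-)$-contraction. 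I do not expect a genuine obstacle here; the only step deserving care is this final surjectivity/identification, which ensures that dissipativity of $A_L^*$ quantifies over all of $D(L^\sharp)$ and not merely a subset, exactly as in the closing line of the proof of Lemma~\ref{lem:A_L dissipative}.
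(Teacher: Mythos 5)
Your proposal is correct and follows essentially the same route as the paper: compute $2\Re(A_L^*u\mid u)$ via the boundary form $\Omega$ and $\omega$, substitute the boundary relation $x=L^\sharp y$ from the description of $D(A_L^*)$, and identify the resulting set of $(0+)$-boundary traces with $D(L^\sharp)$. Your explicit appeal to the surjectivity of $F$ in the last step is a slightly more careful justification of what the paper states in one line, but it is the same argument.
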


\begin{proof}
  Let $u\in D(A_L^*)$. Then
  \begin{align*}
    2\Re (A_L^* u \mid u) & = \Omega((u,A_L^*u),(u,A_L^*u)) \\
    & = \bigl\langle L^\sharp \bigl( u(0+),u'(0+),u''(0+)\bigr), L^\sharp \bigl(u(0+),u'(0+),u''(0+)\bigr)\bigr\rangle_- \\
    & - \bigl\langle \bigl(u(0+),u'(0+),u''(0+)\bigr), \bigl(u(0+),u'(0+),u''(0+)\bigr)\bigr\rangle_+\ .
  \end{align*}
  Thus, $A_L^*$ is dissipative if and only if
  \begin{align*}
    & \bigl\langle L^\sharp \bigl( u(0+),u'(0+),u''(0+)\bigr), L^\sharp \bigl(u(0+),u'(0+),u''(0^+)\bigr)\bigr\rangle_- \\
    &\qquad \leq \bigl\langle \bigl(u(0+),u'(0+),u''(0+)\bigr), \bigl(u(0+),u'(0+),u''(0+)\bigr)\bigr\rangle_+  \quad\hbox{for all }u\in D(A_L)\ .
  \end{align*}
  By definition of $A_L$ we have $D(L^\sharp) = \bigl\{\bigl(u(0+),u'(0+),u''(0+)\bigr); u\in D(A_L)\bigr\}$, so the assertion follows.
\end{proof}

\begin{theorem}
\label{thm:contraction_semigroup}
  Let $L$ be a densely defined linear operator from $\mathcal{G}_-$ to $\mathcal{G}_+$. Then $A_L$ is the generator of a contraction semigroup
  if and only if $L$ is a $(\mathcal{G}_-,\mathcal{G}_+)$-contraction and $L^\sharp$ is a $(\mathcal{G}_+,\mathcal{G}_-)$-contraction.
\end{theorem}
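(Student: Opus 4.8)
The plan is to deduce the theorem from the Lumer--Phillips theorem together with the two dissipativity characterisations already established. Recall that, in a Hilbert space, a densely defined closed operator generates a contraction semigroup if and only if both it and its adjoint are dissipative, the relevant version being \cite[Corollary 3.17]{EngelNagel99}. Lemma~\ref{lem:A_L dissipative} and Lemma~\ref{lem:A_L^*_dissipative} have already translated the two dissipativity requirements into the Krein-space conditions that $L$ be a $(\mathcal{G}_-,\mathcal{G}_+)$-contraction and that $L^\sharp$ be a $(\mathcal{G}_+,\mathcal{G}_-)$-contraction, respectively. Thus the task reduces to wiring these three facts together, once the standing hypotheses of Lumer--Phillips --- dense domain and the range condition --- have been verified for $A_L$.

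For the implication from the contraction conditions to generation, I would first note that $A_L$ is densely defined: every $u\in D(A_0)$ has vanishing boundary data, so $F(u,A_0^*u)=0\in G(L)$, whence $A_0\subseteq A_L$ and $D(A_L)\supseteq D(A_0)$ is dense in $L^2(\mG)$ (and $A_L u=-A_0^*u=A_0 u$ there). Assuming $L$ is a $(\mathcal{G}_-,\mathcal{G}_+)$-contraction, Lemma~\ref{lem:A_L dissipative} makes $A_L$ dissipative; assuming $L^\sharp$ is a $(\mathcal{G}_+,\mathcal{G}_-)$-contraction, Lemma~\ref{lem:A_L^*_dissipative} makes $A_L^*$ dissipative. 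With $A_L$ densely defined, dissipative, and with dissipative adjoint, \cite[Corollary 3.17]{EngelNagel99} yields that $A_L$ generates a contraction semigroup.

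For the converse, if $A_L$ generates a contraction semigroup then $A_L$ is dissipative by the necessity part of Lumer--Phillips; moreover, since we work in a Hilbert space, the adjoint semigroup is again a strongly continuous contraction semigroup, and it is generated by $A_L^*$, so $A_L^*$ is dissipative as well. Reading Lemma~\ref{lem:A_L dissipative} and Lemma~\ref{lem:A_L^*_dissipative} in the reverse direction then forces $L$ to be a $(\mathcal{G}_-,\mathcal{G}_+)$-contraction and $L^\sharp$ to be a $(\mathcal{G}_+,\mathcal{G}_-)$-contraction, completing the equivalence.

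The step I expect to require the most care is the range (surjectivity) condition concealed inside the Lumer--Phillips corollary, equivalently the closedness of $A_L$. Dissipativity of $A_L^*$ gives $\ker(I-A_L^*)=\{0\}$, hence $R(I-A_L)$ is dense in $L^2(\mG)$; to upgrade this to the whole space one uses closedness of $A_L$, so that the lower bound $\|(I-A_L)u\|\ge \|u\|$ coming from dissipativity forces the range to be closed, and therefore equal to $L^2(\mG)$. Closedness of $A_L$ in turn rests on continuity of the boundary trace map $F$ on $G(A_0^*)$ (via Sobolev embedding and the boundedness of the coefficients $\alpha_\pm,\beta_\pm,\tfrac{1}{\alpha_\pm}$) together with the closedness of $A_0^*$; this is exactly the point where the abstract boundary-form framework of \cite{SchSeiVoi15} can be brought in to keep the argument clean. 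Everything else is a direct reading-off of the two preceding lemmas.
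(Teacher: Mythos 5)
Your argument is essentially identical to the paper's proof: both directions are obtained by combining the Lumer--Phillips theorem (in the Hilbert-space form where dissipativity of $A_L$ and $A_L^*$ suffices) with Lemmas~\ref{lem:A_L dissipative} and~\ref{lem:A_L^*_dissipative}, plus the fact that the adjoint of a contraction semigroup is again a contraction semigroup generated by the adjoint. The extra care you take over density of $D(A_L)$, closedness of $A_L$, and the range condition is sound and in fact fills in hypotheses that the paper's own (very terse) proof leaves implicit.
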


\begin{proof}
  Let $A_L$ generate a $C_0$-semigroup $(T(t))_{t\ge0}$ of contractions. By the Lumer-Phillips Theorem, $A_L$ is dissipative.
  Since $A_L^*$ generates the semigroup $T^*$ defined by $T^*(t):=T(t)^*$ ($t\geq 0$), which is also a $C_0$-semigroup of contractions, the Lumer-Phillips Theorem assures that $A_L^*$ is dissipative as well.
  Then the Lemmas \ref{lem:A_L dissipative} and \ref{lem:A_L^*_dissipative} yield that $L$ and $L^\sharp$ are contractions.
  
  Now, let $L$ and $L^\sharp$ be contractions.
  Then Lemma \ref{lem:A_L dissipative} and \ref{lem:A_L^*_dissipative} yields that $A_L$ and $A_L^*$ are dissipative. Hence, $A_L$ generates a contraction semigroup.
\end{proof}

%\textcolor{orange}{
\begin{remark}\label{rem:mass-2}
Like in the case of Remark~\ref{rem:mass}, 
if Theorem~\ref{thm:contraction_semigroup} applies, then the Airy equation~\eqref{airy} on the quantum star graph $\mG$ has a unique solution $u\in C^1(\mathbb R_+;L^2(\mG))\cap C(\mathbb R_+;D(A))$ that continuously depends on the initial data $u_0\in L^2(\mG)$. 
Because the $C_0$-semigroup is contractive but not unitary, the momentum $\|u\|^2_{L^2(\mG)}$ is in general not conserved, but as in Remark~\ref{rem:mass} the system does enjoy conservation of mass if and only if additionally~\eqref{eq:yesmass} holds.
\end{remark}
%}

\subsection{Separating the first derivatives}

The special structure of $B_\pm$ suggests to separate the boundary values of the first derivative from the ones for the function and for the second derivative. 
In this case, one can describe the boundary conditions also in another (equivalent but seemingly easier) way.

Note that $\alpha_\me>0$ for all $\me\in\mE$. We will write $\ell^2(\mE_\pm,\alpha_\pm)$ for the weighted $\ell^2$-space of sequences indexed by $\mE_\pm$ with inner product given by
\[\bigl(x\mid y\bigr)_{\ell^2(\mE_\pm,\alpha_\pm)} := \sum_{\me\in\mE_\pm} x_\me\overline{y}_\me \alpha_\me = \bigl(\alpha_\pm x\mid y\bigr)\]
for all $x,y\in \ell^2(\mE_\pm,\alpha_\pm)$, which turns them into Hilbert spaces.

For $u,v\in D(A_0^*)$ we then obtain
\begin{align*}
  & (A_0^* u \mid v) + (u\mid A_0^*v) \\
  & = \bigl(\alpha_- u'(0-) \mid v'(0-) \bigr) - \bigl(\alpha_+ u'(0+) \mid v'(0+) \bigr) \\
  & \quad 
  + \bigg(\begin{pmatrix} -\beta_-  & -\alpha_- \\
	  -\alpha_- & 0 \end{pmatrix}\begin{pmatrix}
	  u(0-)\\u''(0-)\end{pmatrix} \bigg| \begin{pmatrix}
	  v(0-)\\v''(0-)\end{pmatrix} \bigg)
  - \bigg(\begin{pmatrix} -\beta_+  & -\alpha_+ \\
	  -\alpha_+ & 0 \end{pmatrix}\begin{pmatrix}
	  u(0+)\\u''(0+)\end{pmatrix} \bigg| \begin{pmatrix}
	  v(0+)\\v''(0+)\end{pmatrix} \bigg) \\
  & = \bigl(\alpha_- u'(0-) \mid v'(0-) \bigr) - \bigl(\alpha_+ u'(0+) \mid v'(0+) \bigr) \\
  & \quad 
  - \bigg( \begin{pmatrix} \alpha_- & 0 \\ 0 & -\alpha_+ \end{pmatrix}\begin{pmatrix} u''(0-) \\ u''(0+)\end{pmatrix} + \begin{pmatrix} \frac{\beta_-}{2} & 0 \\ 0 & -\frac{\beta_+}{2} \end{pmatrix}\begin{pmatrix} u(0-) \\ u(0+)\end{pmatrix} \bigg| \begin{pmatrix} v(0-) \\ v(0+)\end{pmatrix} \bigg) \\
  & \quad - \bigg( \begin{pmatrix} u(0-) \\ u(0+)\end{pmatrix} \bigg| \begin{pmatrix} \alpha_- & 0 \\ 0 & -\alpha_+ \end{pmatrix}\begin{pmatrix} v''(0-) \\ v''(0+)\end{pmatrix} + \begin{pmatrix} \frac{\beta_-}{2} & 0 \\ 0 & -\frac{\beta_+}{2} \end{pmatrix}\begin{pmatrix} v(0-) \\ v(0+)\end{pmatrix}\bigg)\ .
\end{align*}

Let $Y\subseteq \ell^2(\mE_-)\oplus \ell^2(\mE_+)$ be a closed subspace, 
$U$ a densely defined linear operator from $\ell^2(\mE_-,\alpha_-)$ to $\ell^2(\mE_+,\alpha_+)$, and consider
\begin{align*}
  D(A_{Y,U}) & := \{u\in D(A_0^*);\; (u(0-),u(0+))\in Y, \\
  & \qquad\qquad \begin{pmatrix} -\alpha_- & 0 \\ 0 & \alpha_+ \end{pmatrix}\begin{pmatrix} u''(0-) \\ u''(0+)\end{pmatrix} + \begin{pmatrix} -\frac{\beta_-}{2} & 0 \\ 0 & \frac{\beta_+}{2} \end{pmatrix}\begin{pmatrix} u(0-) \\ u(0+)\end{pmatrix} \in Y^\bot,\\
  & \qquad\qquad u'(0-)\in D(U),\, u'(0+) = U u'(0-)\},\\
  A_{Y,U} u & := -A_0^*u.					
\end{align*}

\begin{proposition}
\label{prop:adjoint}
  Let $Y\subseteq \ell^2(\mE_-)\oplus \ell^2(\mE_+)$ be a closed subspace, $U$ a densely defined linear operator from $\ell^2(\mE_-,\alpha_-)$ to $\ell^2(\mE_+,\alpha_+)$.
  Then 
  \begin{align*}
    D(A_{Y,U}^*) & = \{u\in D(A_0^*);\; (u(0-),u(0+))\in Y, \\
    & \qquad\qquad \begin{pmatrix} -\alpha_- & 0 \\ 0 & \alpha_+ \end{pmatrix}\begin{pmatrix} u''(0-) \\ u''(0+)\end{pmatrix} + \begin{pmatrix} -\frac{\beta_-}{2} & 0 \\ 0 & \frac{\beta_+}{2} \end{pmatrix}\begin{pmatrix} u(0-) \\ u(0+)\end{pmatrix} \in Y^\bot,\\
    & \qquad\qquad u'(0+)\in D(U^*),\, u'(0-) = U^* u'(0+)\},\\
    A_{Y,U}^* u & = A_0^*u.					
  \end{align*}
\end{proposition}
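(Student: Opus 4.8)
The plan is to reduce the computation of $A_{Y,U}^*$ to the vanishing of the boundary form displayed just above the statement, and then to read off the three conditions on $v$ by testing against suitably chosen $u\in D(A_{Y,U})$. First I would record the inclusions $A_0\subseteq A_{Y,U}\subseteq -A_0^*$: the left one holds because every $u\in D(A_0)$ has vanishing boundary data, so all three constraints defining $D(A_{Y,U})$ are trivially met, while the right one is built into the definition. Taking adjoints gives $A_{Y,U}^*\subseteq A_0^*$, so that $D(A_{Y,U}^*)\subseteq D(A_0^*)$ and $A_{Y,U}^*v=A_0^*v$ automatically, which already yields the asserted action (note the sign flip: the adjoint of $-A_0^*$ carries the minus sign back to $+A_0^*$). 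Consequently $v\in D(A_{Y,U}^*)$ if and only if $v\in D(A_0^*)$ and $(A_{Y,U}u\mid v)=(u\mid A_0^*v)$ for all $u\in D(A_{Y,U})$; since $A_{Y,U}u=-A_0^*u$, this is precisely the requirement that the boundary form $(A_0^*u\mid v)+(u\mid A_0^*v)$ vanish for every $u\in D(A_{Y,U})$.

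Next I would use the decomposed boundary form displayed immediately above the statement, i.e.\ the sum of the weighted first-derivative pairing $(\alpha_- u'(0-)\mid v'(0-))-(\alpha_+ u'(0+)\mid v'(0+))$ and the two cross terms pairing the function values $(u(0-),u(0+))$, respectively $(v(0-),v(0+))$, against the flux combinations occurring there. The essential input is that the trace map $F$ of~\eqref{eq:Fdef} is surjective, so that the function values $(u(0-),u(0+))$, the first derivatives, and the flux combination may be prescribed independently; under the constraints defining $D(A_{Y,U})$ this means $(u(0-),u(0+))$ ranges freely over $Y$, the flux combination ranges freely over $Y^\bot$, and $u'(0-)$ ranges freely over $D(U)$ with $u'(0+)=Uu'(0-)$. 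Testing against $u$ drawn from each of these three families in turn forces three conditions on $v$: taking the flux term and $u'$ zero with $(u(0-),u(0+))$ arbitrary in $Y$ forces the flux combination of $v$ into $Y^\bot$; taking $(u(0-),u(0+))$ and $u'$ zero with the flux term arbitrary in $Y^\bot$ forces $(v(0-),v(0+))\in(Y^\bot)^\bot=Y$ (using that $Y$ is closed); and taking both vanishing with $u'(0-)$ arbitrary in $D(U)$ gives $(Uu'(0-)\mid v'(0+))_{\ell^2(\mE_+,\alpha_+)}=(u'(0-)\mid v'(0-))_{\ell^2(\mE_-,\alpha_-)}$ for all $u'(0-)\in D(U)$, which by the very definition of the Hilbert-space adjoint $U^*$ between the weighted spaces is equivalent to $v'(0+)\in D(U^*)$ and $U^*v'(0+)=v'(0-)$.

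Finally I would verify that these three conditions are also sufficient: if $v$ satisfies all of them, then for an arbitrary $u\in D(A_{Y,U})$ both flux pairings vanish because $Y$ and $Y^\bot$ are mutually orthogonal, while the weighted first-derivative pairing vanishes by the defining relation for $U^*$, so the full boundary form is zero and $v\in D(A_{Y,U}^*)$. The points demanding care are the sign bookkeeping (the flux matrix in the constraint is the negative of the one in the form, which is harmless since $Y^\bot$ is a subspace) and the justification, via surjectivity of $F$, that the three families of test functions are genuinely independent. The main subtlety, however, is the correct identification of the first-derivative condition with the adjoint $U^*$ taken in the \emph{weighted} spaces $\ell^2(\mE_\pm,\alpha_\pm)$ rather than in the plain $\ell^2$ spaces; this identification is exactly the definition of $D(U^*)$ and remains valid even when $U$ is merely densely defined and unbounded.
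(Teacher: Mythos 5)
Your proposal is correct and follows essentially the same route as the paper: compute the boundary form $(A_{Y,U}u\mid v)-(u\mid A_0^*v)$ for $u\in D(A_{Y,U})$, $v\in D(A_0^*)$, and then test against the three independent families of traces (function values in $Y$, flux combination in $Y^\bot$, first derivative in $D(U)$) to read off the three conditions, with the $U^*$-condition identified in the weighted spaces exactly as in the paper. Your additional care about $A_{Y,U}^*\subseteq A_0^*$ (via $A_0\subseteq A_{Y,U}$) and about the sufficiency direction only makes explicit what the paper's proof leaves implicit.
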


\begin{proof}
  Let $u\in D(A_{Y,U})$, $v\in D(A_0^*)$. Then
  \begin{align*}
    (A_{Y,U}u \mid v) 
    & = (u\mid A_0^*v) 
    + \bigl( \alpha_+ Uu'(0-) \mid v'(0+)  \bigr) - \bigl( \alpha_- u'(0-) \mid v'(0-) \bigr) \\
    & \quad + \bigg( \begin{pmatrix} \alpha_- & 0 \\ 0 & -\alpha_+ \end{pmatrix}\begin{pmatrix} u''(0-) \\ u''(0+)\end{pmatrix} + \begin{pmatrix} \frac{\beta_-}{2} & 0 \\ 0 & -\frac{\beta_+}{2} \end{pmatrix}\begin{pmatrix} u(0-) \\ u(0+)\end{pmatrix} \bigg| \begin{pmatrix} v(0-) \\ v(0+)\end{pmatrix} \bigg) \\
    & \quad + \bigg( \begin{pmatrix} u(0-) \\ u(0+)\end{pmatrix} \bigg| \begin{pmatrix} \alpha_- & 0 \\ 0 & -\alpha_+ \end{pmatrix}\begin{pmatrix} v''(0-) \\ v''(0+)\end{pmatrix} + \begin{pmatrix} \frac{\beta_-}{2} & 0 \\ 0 & -\frac{\beta_+}{2} \end{pmatrix}\begin{pmatrix} v(0-) \\ v(0+)\end{pmatrix}\bigg).
  \end{align*}
  
  Let now $v\in D(A_{Y,U}^*)$ such that $(A_{Y,U}u \mid v) = (u\mid A_{Y,U}^*v) = (u\mid A_0^*v)$.
  Choosing $u$ such that $(u(0-),u(0+)) = (u'(0-),u'(0+)) = 0$, we obtain $(v(0-),v(0+))\in Y$.
  For all $(u_-,u_+)\in Y$ there exists $u\in D(A_{Y,U})$ such that $(u(0-),u(0+)) = (u_-,u_+)$, and $(u'(0-),u'(0+)) = 0$. Thus,
  \[\begin{pmatrix} \alpha_- & 0 \\ 0 & -\alpha_+ \end{pmatrix}\begin{pmatrix} v''(0-) \\ v''(0+)\end{pmatrix} + \begin{pmatrix} \frac{\beta_-}{2} & 0 \\ 0 & -\frac{\beta_+}{2} \end{pmatrix}\begin{pmatrix} v(0-) \\ v(0+)\end{pmatrix} \in Y^\bot.\]
  Thus, we arrive at
  \[\bigl( \alpha_+ Uu'(0-) \mid v'(0+)  \bigr) - \bigl( \alpha_- u'(0-) \mid v'(0-) \bigr) = 0 \quad (u\in D(A_{Y,U})).\]
  Note that for all $x\in D(U)$ there exists $u\in D(A_{Y,U})$ such that $u'(0-) = x$.
  Hence, $v'(0+)\in D(U^*)$ and $U^* v'(0+) = v'(0-)$.  
\end{proof}

\begin{corollary}
\label{cor:skew-self-adjoint_separated}
  Let $Y\subseteq \ell^2(\mE_-)\oplus \ell^2(\mE_+)$ be a closed subspace, $U$ a densely defined linear operator from $\ell^2(\mE_-,\alpha_-)$ to $\ell^2(\mE_+,\alpha_+)$.
  Then $A_{Y,U}$ is skew-self-adjoint if and only if $U$ is unitary.
  %\footnote{\textcolor{orange}{Delio 14.7.2016: I think I do not understand this notation: $(K_-,K_+)$-unitary operators are only defined whenever $K_\pm$ is a Krein space; but $\ell^2(\mE_\pm,\alpha_\pm)$ are Hilbert spaces! or is it simply meant that  $U:\ell^2(E_-,\alpha_-)\to \ell^2(E_+,\alpha_+)$ is unitary? }}
 \end{corollary}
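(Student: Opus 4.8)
The plan is to read the adjoint straight off Proposition~\ref{prop:adjoint} and match it against $A_{Y,U}$. Since $A_{Y,U}$ acts as $-A_0^*$ while $A_{Y,U}^*$ acts as $A_0^*$, the operators $-A_{Y,U}$ and $A_{Y,U}^*$ have the \emph{same} action; as $A_{Y,U}\supseteq A_0$ is densely defined, skew-self-adjointness $A_{Y,U}^*=-A_{Y,U}$ therefore reduces to the equality of domains $D(A_{Y,U})=D(A_{Y,U}^*)$. Comparing the descriptions of these two domains, I observe that the subspace condition $(u(0-),u(0+))\in Y$ and the $Y^\bot$-condition on the second derivatives are \emph{identical} on both sides. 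Hence the whole matter reduces to the two conditions constraining the first derivatives: membership of $(u'(0-),u'(0+))$ in $G(U)=\{(x,Ux);\,x\in D(U)\}$ for $D(A_{Y,U})$, versus membership in $M:=\{(U^*y,y);\,y\in D(U^*)\}$ for $D(A_{Y,U}^*)$, where $U^*$ is the Hilbert-space adjoint taken with respect to the weighted inner products of $\ell^2(\mE_\pm,\alpha_\pm)$.

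Next I would decouple the boundary values using that the map $F$ of~\eqref{eq:Fdef} is surjective. Given any pair $(q_-,q_+)$, there is $u\in D(A_0^*)$ with $u'(0-)=q_-$, $u'(0+)=q_+$ and $u(0\pm)=u''(0\pm)=0$; the latter choice automatically meets the $Y$- and $Y^\bot$-conditions, since $0$ lies in every subspace. Feeding such test functions into the domain comparison shows that $D(A_{Y,U})\subseteq D(A_{Y,U}^*)$ forces $G(U)\subseteq M$, and symmetrically for the reverse inclusion. Thus $A_{Y,U}$ is skew-self-adjoint if and only if $G(U)=M$.

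It remains to establish the purely operator-theoretic equivalence: for densely defined $U$, one has $G(U)=M$ if and only if $U$ is unitary in the sense of Definition~\ref{defi:krein}(4), i.e.\ $D(U)$, $R(U)$ dense, $U$ injective, and $U^*=U^{-1}$. The direction ``$\Leftarrow$'' is substitution: unitarity gives $D(U^*)=R(U)$ with $U^*=U^{-1}$ there, so $M=\{(U^{-1}y,y);\,y\in R(U)\}=\{(z,Uz);\,z\in D(U)\}=G(U)$. For ``$\Rightarrow$'', comparing first and second coordinates of $G(U)=M$ yields $D(U)=R(U^*)$, $R(U)=D(U^*)$, together with the pointwise identities $U^*U=\id$ on $D(U)$ and $UU^*=\id$ on $D(U^*)$; the first of these already gives injectivity of $U$ and $U^*=U^{-1}$.

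The step requiring the most care is the density of $R(U)$ (equivalently of $D(U^*)$), which does \emph{not} follow from the pointwise relations alone. Here I would argue that $U^*=U^{-1}$ is closed, being an adjoint; hence $U=(U^{-1})^{-1}$ is closed; and a densely defined closed operator between Hilbert spaces has densely defined adjoint, so $R(U)=D(U^*)$ is indeed dense. Combined with the standing hypothesis that $D(U)$ is dense, this verifies all four requirements of unitarity, completing the equivalence and hence the corollary.
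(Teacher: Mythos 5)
Your proof is correct and takes essentially the same route as the paper, whose entire proof of this corollary is the one-line remark that it is a direct consequence of Proposition~\ref{prop:adjoint}. You have merely made explicit the steps the paper leaves implicit: reducing skew-self-adjointness to equality of the domains, isolating the first-derivative conditions via surjectivity of the boundary map $F$, and verifying (including the density of $R(U)$ via closedness of $U^{-1}=U^*$) that $G(U)=\{(U^*y,y):y\in D(U^*)\}$ is equivalent to unitarity of $U$.
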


\begin{proof}
  This is a direct consequence of Proposition \ref{prop:adjoint}.
\end{proof}

\begin{corollary}
\label{cor:A_dissipative}
  Let $Y\subseteq \ell^2(\mE_-)\oplus \ell^2(\mE_+)$ be a closed subspace, $U$ a densely defined linear operator from $\ell^2(\mE_-,\alpha_-)$ to $\ell^2(\mE_+,\alpha_+)$.
  Then $A_{Y,U}$ is dissipative if and only if $U$ is a contraction.  
\end{corollary}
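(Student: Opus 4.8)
The plan is to run the same dissipativity computation as in Lemma~\ref{lem:A_L dissipative}, but through the separated boundary form displayed just before Proposition~\ref{prop:adjoint}. Since $A_{Y,U}\subseteq -A_0^*$, inserting $v=u$ into that form yields
$$-2\Re(A_{Y,U}u\mid u)=(A_0^*u\mid u)+(u\mid A_0^*u),$$
whose right-hand side consists of a first-derivative part $\bigl(\alpha_- u'(0-)\mid u'(0-)\bigr)-\bigl(\alpha_+ u'(0+)\mid u'(0+)\bigr)$ together with a part coupling $(u(0-),u(0+))$ to $(u''(0-),u''(0+))$.

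The key step is that the coupling part vanishes on $D(A_{Y,U})$. Write $z=(u(0-),u(0+))$ and let $w$ be the vector against which $z$ is paired in the coupling terms, so that this part of the boundary form equals $-(w\mid z)-(z\mid w)=-2\Re(w\mid z)$. Comparing the expression for $w$ with the domain definition of $A_{Y,U}$, one sees that the prescribed $Y^\bot$-membership is exactly the statement $-w\in Y^\bot$, hence $w\in Y^\bot$, while the first domain condition gives $z\in Y$. Therefore $(w\mid z)=0$ and the coupling terms drop out. Recognizing that the $Y$/$Y^\bot$ prescription is precisely the orthogonality that annihilates the function/second-derivative contribution is the heart of the matter; everything else is bookkeeping.

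With the coupling gone, I would substitute the transmission relation $u'(0+)=Uu'(0-)$ and rewrite the surviving terms via $(\alpha_\pm x\mid y)=(x\mid y)_{\ell^2(\mE_\pm,\alpha_\pm)}$ to obtain
$$-2\Re(A_{Y,U}u\mid u)=\|u'(0-)\|^2_{\ell^2(\mE_-,\alpha_-)}-\|Uu'(0-)\|^2_{\ell^2(\mE_+,\alpha_+)}.$$
Thus $A_{Y,U}$ is dissipative precisely when $\|Uu'(0-)\|_{\ell^2(\mE_+,\alpha_+)}\le\|u'(0-)\|_{\ell^2(\mE_-,\alpha_-)}$ for all $u\in D(A_{Y,U})$. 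To convert this into a condition on $U$ alone, I would invoke the surjectivity already used in the proof of Proposition~\ref{prop:adjoint}: for every $x\in D(U)$ there is a $u\in D(A_{Y,U})$ with $u'(0-)=x$, so $\{u'(0-);\,u\in D(A_{Y,U})\}=D(U)$. Hence the inequality holds for all admissible $u$ if and only if $\|Ux\|_{\ell^2(\mE_+,\alpha_+)}\le\|x\|_{\ell^2(\mE_-,\alpha_-)}$ for all $x\in D(U)$, i.e.\ if and only if $U$ is a contraction. The only subtlety worth watching is to keep the weighted inner products aligned, since dissipativity is measured in $L^2(\mG)$ whereas contractivity of $U$ is phrased in the $\alpha_\pm$-weighted spaces; the identity above makes the two match exactly.
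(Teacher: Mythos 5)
Your proposal is correct and follows essentially the same route as the paper: insert $v=u$ into the separated boundary-form computation from Proposition~\ref{prop:adjoint}, observe that the $Y$/$Y^\bot$ coupling terms vanish by orthogonality, and reduce dissipativity to $\|Uu'(0-)\|_{\ell^2(\mE_+,\alpha_+)}\le\|u'(0-)\|_{\ell^2(\mE_-,\alpha_-)}$, which the surjectivity of the trace $u\mapsto u'(0-)$ onto $D(U)$ turns into contractivity of $U$. The paper leaves the vanishing of the coupling terms and the surjectivity step implicit, but your argument matches its proof in substance and sign.
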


\begin{proof}
  Let $u\in D(A_{Y,U})$. Then
  \[(A_{Y,U} u \mid u) = (u \mid A_0^* u) + \bigl( \alpha_+ Uu'(0-) \mid U u'(0-)  \bigr) - \bigl( \alpha_- u'(0-) \mid u'(0-) \bigr).\]
  Since $A_0^*u = -A_{Y,U} u$, we obtain
  \[ 2\Re (A_{Y,U} u \mid u) = \bigl( \alpha_+ Uu'(0-) \mid U u'(0-)  \bigr) - \bigl( \alpha_- u'(0-) \mid u'(0-) \bigr).\]
  Hence, $A_{Y,U}$ is dissipative if and only if $U$ is a contraction.
\end{proof}

% {\color{red}
% \begin{corollary}
%   Let $Y\subseteq \ell^2(\mE_-,\alpha_-)\oplus \ell^2(\mE_+,\alpha_+)$ be a closed subspace, $U$ a densely defined linear operator from $\ell^2(E_-,\alpha_-)$ to $\ell^2(E_+,\alpha_+)$.
%   Then $\pm A_{Y,U}$ are dissipative if and only if $U$ is $\bigl(\ell^2(E_-,\alpha_-),\ell^2(E_+,\alpha_+)\bigr)$-isometric.  
% \end{corollary}
% 
% \begin{proof}
%   Note that $\pm A_{Y,U}$ are dissipative if and only if $\Re (A_{Y,U} u \mid u) = 0$ for all $u\in D(A_{Y,U})$.
%   Since (see the proof of Corollary \ref{cor:A_dissipative})
%   \[2\Re (A_{Y,U} u \mid u) = \bigl( \alpha_+ Uu'(0-) \mid U u'(0-)  \bigr) - \bigl( \alpha_- u'(0-) \mid u'(0-) \bigr) \quad(u\in D(A_{Y,U})),\]
%   we obtain the assertion.  
% \end{proof}
% }

\begin{corollary}
\label{cor:A^*_dissipative}
  Let $Y\subseteq \ell^2(\mE_-)\oplus \ell^2(\mE_+)$ be a closed subspace, $U$ a densely defined linear operator from $\ell^2(\mE_-,\alpha_-)$ to $\ell^2(\mE_+,\alpha_+)$.
  Then $A_{Y,U}^*$ is dissipative if and only if $U^*$ is a contraction.  
\end{corollary}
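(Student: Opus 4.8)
The plan is to mirror the proof of Corollary~\ref{cor:A_dissipative}, replacing the data $(Y,U)$ describing $A_{Y,U}$ by the data describing its adjoint, which is furnished by Proposition~\ref{prop:adjoint}. Recall from that proposition that $A_{Y,U}^* u = A_0^* u$ and that $D(A_{Y,U}^*)$ is characterized by the same two subspace conditions $(u(0-),u(0+))\in Y$ and
\[\begin{pmatrix} -\alpha_- & 0 \\ 0 & \alpha_+ \end{pmatrix}\begin{pmatrix} u''(0-) \\ u''(0+)\end{pmatrix} + \begin{pmatrix} -\frac{\beta_-}{2} & 0 \\ 0 & \frac{\beta_+}{2} \end{pmatrix}\begin{pmatrix} u(0-) \\ u(0+)\end{pmatrix} \in Y^\bot,\]
together with the first-derivative relation $u'(0-) = U^* u'(0+)$.

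First I would write, for $u\in D(A_{Y,U}^*)$, the identity $2\Re (A_{Y,U}^* u \mid u) = (A_0^* u \mid u) + (u \mid A_0^* u)$ (using $A_{Y,U}^* u = A_0^* u$) and insert the separated boundary form computed at the start of this subsection, taken with $v=u$. The two mixed terms then pair the vector $\begin{pmatrix} u(0-) \\ u(0+)\end{pmatrix}$, which lies in $Y$, against
\[\begin{pmatrix} \alpha_- & 0 \\ 0 & -\alpha_+ \end{pmatrix}\begin{pmatrix} u''(0-) \\ u''(0+)\end{pmatrix} + \begin{pmatrix} \frac{\beta_-}{2} & 0 \\ 0 & -\frac{\beta_+}{2} \end{pmatrix}\begin{pmatrix} u(0-) \\ u(0+)\end{pmatrix},\]
which is precisely the negative of the vector constrained to lie in $Y^\bot$ in the description of $D(A_{Y,U}^*)$; since $Y^\bot$ is a subspace, both mixed terms vanish. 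What survives is only the first-derivative contribution, leaving
\[2\Re (A_{Y,U}^* u \mid u) = (\alpha_- u'(0-) \mid u'(0-)) - (\alpha_+ u'(0+) \mid u'(0+)).\]

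Next I would substitute the domain relation $u'(0-) = U^* u'(0+)$ and rewrite both terms via the weighted inner products $(\alpha_\pm x\mid y) = (x\mid y)_{\ell^2(\mE_\pm,\alpha_\pm)}$, obtaining
\[2\Re (A_{Y,U}^* u \mid u) = \norm{U^* u'(0+)}^2_{\ell^2(\mE_-,\alpha_-)} - \norm{u'(0+)}^2_{\ell^2(\mE_+,\alpha_+)},\]
where $U^*$ is understood as the Hilbert-space adjoint from $\ell^2(\mE_+,\alpha_+)$ to $\ell^2(\mE_-,\alpha_-)$. Dissipativity of $A_{Y,U}^*$ is then equivalent to the inequality $\norm{U^* y}_{\ell^2(\mE_-,\alpha_-)} \le \norm{y}_{\ell^2(\mE_+,\alpha_+)}$ holding as $y$ ranges over the values attained by $u'(0+)$, i.e.\ exactly to $U^*$ being a contraction.

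The only point requiring a short separate verification --- and the closest thing to an obstacle --- is the analogue of the surjectivity remark used in the proof of Corollary~\ref{cor:A_dissipative}, namely that $\{u'(0+);\, u\in D(A_{Y,U}^*)\} = D(U^*)$. This is immediate: given $y\in D(U^*)$ one may choose $u\in D(A_0^*)$ with $u'(0+)=y$, $u'(0-)=U^* y$ and $u(0\pm)=u''(0\pm)=0$; these trivially give $(u(0-),u(0+))\in Y$ and place the relevant second-derivative vector in $Y^\bot$, so $u\in D(A_{Y,U}^*)$. With this, the equivalence above becomes exact and the claim follows; the remainder is the same routine bookkeeping as in Corollary~\ref{cor:A_dissipative} and Lemma~\ref{lem:A_L^*_dissipative}.
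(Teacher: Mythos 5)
Your proposal is correct and follows essentially the same route as the paper: both compute $2\Re(A_{Y,U}^*u\mid u)$ from the separated boundary form using the description of $D(A_{Y,U}^*)$ in Proposition~\ref{prop:adjoint}, observe that the mixed $Y$/$Y^\bot$ terms cancel, and reduce dissipativity to $\bigl(\alpha_- U^*u'(0+)\mid U^*u'(0+)\bigr)\le\bigl(\alpha_+ u'(0+)\mid u'(0+)\bigr)$. Your explicit check that every $y\in D(U^*)$ is attained as $u'(0+)$ for some $u\in D(A_{Y,U}^*)$ is a detail the paper leaves implicit, but it is the same argument.
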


\begin{proof}
  Let $u\in A_{Y,U}^*$. Then, similarly as for $A_{Y,U}$, we have
  \[(A_{Y,U}^* u \mid u) = (u \mid -A_0^* u) - \bigl( \alpha_+ u'(0+) \mid  u'(0+)  \bigr) + \bigl( \alpha_- U^* u'(0+) \mid U^* u'(0+) \bigr).\]
  Since $A_0^*u = A_{Y,U}^* u$, we obtain
  \[ 2\Re (A_{Y,U} u \mid u) = \bigl( \alpha_- U^* u'(0+) \mid U^* u'(0+) \bigr) - \bigl( \alpha_+ u'(0+) \mid  u'(0+)  \bigr).\]
  Hence, $A_{Y,U}^*$ is dissipative if and only if $U^*$ is a contraction.
\end{proof}

% {\color{red}
% \begin{corollary}
%   Let $Y\subseteq \ell^2(\mE_-,\alpha_-)\oplus \ell^2(\mE_+,\alpha_+)$ be a closed subspace, $U\colon \ell^2(E_-,\alpha_-)\to \ell^2(E_+,\alpha_+)$ linear.
%   Then $\pm A_{Y,U}^*$ are dissipative if and only if $U^\sharp$ is $\bigl(\ell^2(E_+,\alpha_+),\ell^2(E_-,\alpha_-)\bigr)$-isometric.
% \end{corollary}
% 
% \begin{proof}
%   Note that $\pm A_{Y,U}^*$ are dissipative if and only if $\Re (A_{Y,U}^* u \mid u) = 0$ for all $u\in D(A_{Y,U}^*)$.
%   Since (see the proof of Corollary \ref{cor:A^*_dissipative})
%   \[2\Re (A_{Y,U}^* u \mid u) = \|U^\sharp u'(0+)\|_{\ell^2(E_-,\alpha_-)}^2 - \|u'(0+)\|_{\ell^2(E_+,\alpha_+)}^2 \quad(u\in D(A_{Y,U}^*)),\]
%   we obtain the assertion.  
% \end{proof}
% }

\begin{theorem}
\label{thm:contraction_separated}
  Let $Y\subseteq \ell^2(\mE_-)\oplus \ell^2(\mE_+)$ be a closed subspace, $U$ a densely defined linear operator from $\ell^2(\mE_-,\alpha_-)$ to $\ell^2(\mE_+,\alpha_+)$.
  Then $A_{Y,U}$ generates a contraction $C_0$-semigroup if and only if $U$ is a contraction and $U^*$ is a contraction if and only if $U$ is a contraction.
\end{theorem}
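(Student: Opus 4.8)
The plan is to read the doubly-stated biconditional as two separate equivalences and dispatch each one. The first equivalence, that $A_{Y,U}$ generates a contraction $C_0$-semigroup if and only if both $U$ and $U^*$ are contractions, is essentially already assembled from the preceding corollaries together with the Lumer--Phillips theorem, exactly as in the proof of Theorem~\ref{thm:contraction_semigroup}. The second equivalence, that requiring $U$ and $U^*$ to be contractions is no more than requiring $U$ to be a contraction, is a purely functional-analytic observation about adjoints of densely defined contractions between Hilbert spaces.

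For the first equivalence I would first record that $A_{Y,U}$ is densely defined: every $u\in D(A_0)$ has all of its boundary values $u(0\pm),u'(0\pm),u''(0\pm)$ equal to zero, so the defining constraints ($(u(0-),u(0+))\in Y$, the stated combination lying in $Y^\bot$, and $u'(0+)=Uu'(0-)$) are trivially satisfied; hence $A_0\subseteq A_{Y,U}$ and $D(A_{Y,U})$ is dense. I would then invoke the Lumer--Phillips theorem in the same form used for Theorem~\ref{thm:contraction_semigroup}: a densely defined operator on a Hilbert space generates a contraction $C_0$-semigroup precisely when it and its adjoint are both dissipative (using that $A_{Y,U}^*$ generates the adjoint semigroup). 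By Corollary~\ref{cor:A_dissipative}, $A_{Y,U}$ is dissipative if and only if $U$ is a contraction, and by Corollary~\ref{cor:A^*_dissipative}, $A_{Y,U}^*$ is dissipative if and only if $U^*$ is a contraction. Combining these three facts yields the first equivalence directly.

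For the second equivalence, one implication is trivial, since if $U$ and $U^*$ are both contractions then in particular $U$ is. For the converse I would use that a densely defined contraction $U$ from $\ell^2(\mE_-,\alpha_-)$ to $\ell^2(\mE_+,\alpha_+)$ is bounded with norm at most $1$ on its dense domain, hence extends uniquely to a bounded operator $\widetilde U$ on all of $\ell^2(\mE_-,\alpha_-)$ with $\|\widetilde U\|\le 1$; its Hilbert-space adjoint $U^*=\widetilde U^{\,*}$ is then everywhere defined with $\|U^*\|=\|\widetilde U\|\le 1$, so $U^*$ is again a contraction. Here all adjoints and norms must be taken with respect to the weighted inner products defining $\ell^2(\mE_\pm,\alpha_\pm)$, which is the only point where a little care is needed. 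There is no substantial obstacle: the content is carried entirely by Corollaries~\ref{cor:A_dissipative} and~\ref{cor:A^*_dissipative} and the Lumer--Phillips theorem, and the genuinely new observation is merely that, in contrast to the Krein-space characterisation of Theorem~\ref{thm:contraction_semigroup} where the two contraction conditions on $L$ and $L^\sharp$ are independent, the separated-variable setting forces the condition on $U^*$ to be redundant.
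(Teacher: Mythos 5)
Your proposal is correct and follows essentially the same route as the paper: the Lumer--Phillips theorem combined with Corollaries~\ref{cor:A_dissipative} and~\ref{cor:A^*_dissipative} gives the first equivalence, and the second follows because a densely defined Hilbert-space contraction has a contractive adjoint (a step the paper dismisses with ``clearly'' but which you usefully spell out, including the point about the weighted inner products). No gaps.
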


\begin{proof}
  Let $A_{Y,U}$ generate a contraction semigroup. Then $A_{Y,U}$ is dissipative by the Lumer-Phillips Theorem, so Corollary \ref{cor:A_dissipative} yields that $U$ is a contraction.
  
  Let $U$ and $U^*$ be contractions. Then, Corollaries \ref{cor:A_dissipative} and \ref{cor:A^*_dissipative} assure that $A_{Y,U}$ and $A_{Y,U}^*$ are dissipative.
  Hence, $A_{Y,U}$ generates a semigroup of contractions.
  Clearly, $U^*$ is a contraction provided $U$ is a contraction.
\end{proof}

% {\color{red}
% \begin{theorem}
%   Let $Y\subseteq \ell^2(\mE_-,\alpha_-)\oplus \ell^2(\mE_+,\alpha_+)$ be a closed subspace, $U\colon \ell^2(E_-,\alpha_-)\to \ell^2(E_+,\alpha_+)$ linear.
%   Then $\pm A_{Y,U}$ is skew-self-adjoint if and only if $U$ is unitary.  
% \end{theorem}
% 
% \begin{proof}
%   Folgt beispielsweise aus den beiden roten Folgerungen; da $U$ genau dann unit\"ar ist, wenn $U$ und $U^*$ isometrisch sind.
% \end{proof}
% }

\subsection{Reality of the semigroup}

Let $(T(t))_{t\ge 0}$ be a $C_0$-semigroup on $L^2({\mG})$. We say that $(T(t))_{t\ge 0}$ is \emph{real} if $T(t)\Re u = \Re T(t) u$ for all $u\in L^2({\mG})$ and $t\geq 0$.
Put differently, a semigroup is real if it maps real-valued functions into real-valued functions.

For simplicity we will only consider the case of contractive semigroups here.

\begin{proposition}\label{prop:realsemig}
  Let $L$ be a densely defined linear operator from $\mathcal{G}_-$ to $\mathcal{G}_+$, such that $L$ and $L^\sharp$ are contractions. 
  Let $(T(t))_{t\ge0}$ be the $C_0$-semigroup generated by $A_L$. Let $L$ be real, i.e.\ for $x\in D(L)$ we have $\Re x\in D(L)$ and $\Re Lx = L\Re x$. Then $(T(t))_{t\ge0}$ is real.
\end{proposition}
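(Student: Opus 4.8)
The plan is to reduce the statement about the semigroup to a statement about its generator. Write $J$ for the antilinear isometric involution $u\mapsto\bar u$ on $L^2(\mG)$; since $\Re u=\tfrac12(u+Ju)$, the semigroup $(T(t))_{t\ge0}$ is real precisely when $J T(t)J=T(t)$ for every $t\ge0$. First I would observe that $(JT(t)J)_{t\ge0}$ is again a $C_0$-semigroup: strong continuity follows because $J$ is a continuous isometry, and the semigroup law holds because $J^2=\id$. Because $J$ is antilinear but commutes with real scalars and is continuous, a direct computation of the difference quotient $\tfrac1t\bigl(JT(t)Jv-v\bigr)=J\bigl(\tfrac1t(T(t)Jv-Jv)\bigr)$ shows that the generator of $(JT(t)J)_{t\ge0}$ is $v\mapsto JA_LJv$ with domain $JD(A_L)$. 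Hence, once we know that $A_L$ is \emph{$J$-real}, meaning $JD(A_L)=D(A_L)$ and $JA_L=A_LJ$ on $D(A_L)$, the two semigroups have the same generator and therefore coincide, which is exactly reality of $(T(t))_{t\ge0}$.

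It then remains to prove that $A_L$ is $J$-real. The action is harmless: since the coefficients $\alpha_\me,\beta_\me$ are real, $A_0^*$ satisfies $A_0^*\bar u=\overline{A_0^*u}$ and its domain $\bigoplus_{\me} H^3$ is $J$-invariant, so $A_Lu=-A_0^*u$ already commutes with $J$ as far as the action and the ambient space are concerned. The only nontrivial point is the $J$-invariance of the defining boundary constraint. For this I would first unpack the hypothesis that $L$ is real into the conjugation language: given $x\in D(L)$, both $\Re x$ and $\Im x=\tfrac1i(x-\Re x)$ lie in $D(L)$ and are real vectors, and the reality of $L$ forces $L$ to map real vectors of $D(L)$ to real vectors; consequently $Lx=L\Re x+iL\Im x$ has $\Re(Lx)=L\Re x$ and $\Im(Lx)=L\Im x$, whence $\bar x=\Re x-i\Im x\in D(L)$ and $L\bar x=\overline{Lx}$. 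In other words, reality of $L$ is exactly $J$-invariance of $D(L)$ together with $L\bar x=\overline{Lx}$.

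Finally I would transport this through the trace map $F$. If $u\in D(A_L)$ with boundary vector $x=(u(0-),u'(0-),u''(0-))\in D(L)$ and $L x=(u(0+),u'(0+),u''(0+))$, then the boundary vector of $\bar u$ is $\overline{x}$; by the previous paragraph $\bar x\in D(L)$ and $L\bar x=\overline{Lx}=(\,\overline{u(0+)},\overline{u'(0+)},\overline{u''(0+)}\,)$, which is precisely the $\mE_+$-trace of $\bar u$. Thus $\bar u\in D(A_L)$, giving $JD(A_L)\subseteq D(A_L)$ and, by applying this to $\bar u$ as well, equality; together with $A_L\bar u=-A_0^*\bar u=\overline{-A_0^*u}=\overline{A_Lu}$ this shows $A_L$ is $J$-real, completing the argument. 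The one step needing care is the identification of the generator of $(JT(t)J)_{t\ge0}$, where the antilinearity of $J$ must be tracked through the difference quotient; everything else is bookkeeping with real and imaginary parts.
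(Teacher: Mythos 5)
Your proof is correct, but it follows a genuinely different route from the paper's. The paper invokes the invariance criterion for closed convex sets under contraction semigroups (\cite[Corollary 9.6]{ISEM18}): with $P$ the orthogonal projection of $L^2(\mG;\C)$ onto $L^2(\mG;\R)$, reality is equivalent to $\Re (A_L u \mid u-Pu)\leq 0$ for all $u\in D(A_L)$, which is then verified by splitting $u$ into real and imaginary parts (both lying in $D(A_L)$ by reality of $L$, exactly the bookkeeping you also carry out) and appealing to the dissipativity of $A_L$ from Lemma~\ref{lem:A_L dissipative}. You instead conjugate the semigroup by the antilinear involution $J\colon u\mapsto \bar u$, identify the generator of $(JT(t)J)_{t\ge 0}$ as $JA_LJ$, and show $JA_LJ=A_L$ by checking that the reality of $L$ is equivalent to $J$-invariance of $D(L)$ together with $L\bar x=\overline{Lx}$, which transports through the trace map to $J$-invariance of $D(A_L)$. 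Your argument buys something: it is self-contained (no external invariance criterion), it never uses contractivity or dissipativity, so it applies verbatim to the skew-self-adjoint (unitary group) case and to any other real extension generating a $C_0$-semigroup, and it yields the slightly stronger statement $JT(t)J=T(t)$. The paper's approach, on the other hand, is shorter given the cited criterion and sits naturally alongside the subsequent remark on positivity and $L^\infty$-contractivity, where the relevant closed convex set is not a subspace and the conjugation trick has no analogue. The one step you rightly flag as delicate --- identifying the generator of $(JT(t)J)_{t\ge0}$ despite the antilinearity of $J$ --- goes through because $J$ is a continuous involution commuting with real scalars, so the difference quotients transform as you state.
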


\begin{proof}
  By Theorem \ref{thm:contraction_semigroup}, $(T(t))_{t\ge0}$ is a contraction semigroup.
  Let $P$ be the projection from $L^2({\mG};\C)$ to $L^2({\mG};\R)$, i.e.\ $Pu := \Re u$. By \cite[Corollary 9.6]{ISEM18} realness of the semigroup $(T(t))_{t\ge0}$ is equivalent to the condition
  \[\Re (A_L u \mid u-Pu)\leq 0 \quad\hbox{for all }u\in D(A_L),\]
  i.e.
  \[\Re (A_L u \mid i\Im u) \leq 0 \quad\hbox{for all }u\in D(A_L).\]
  Let $u\in D(A_L)$. Since $L$ is real we obtain $\Re u,\Im u\in D(A_L)$. Since $A_L v$ is real for $v\in D(A_L)$ real we have
  \[\Re (A_L(\Re u + i\Im u) \mid i\Im u) = \Im (A_L\Re u\mid \Im u) + \Re (A_L \Im u\mid \Im u) = \Re (A_L \Im u\mid \Im u) \leq 0,\]
  since $A_L$ is dissipative by Lemma \ref{lem:A_L dissipative}.  
\end{proof}

\begin{remark}
  The semigroup generated by $A_L$ is not positivity preserving, i.e., nonnegative functions need not be mapped to nonnegative functions: indeed, also in this case positivity of the semigroup is again equivalent to 
\begin{equation}\label{eq:nopositi}
\Re (A_L u \mid u-Pu)\leq 0 \quad\hbox{for all }u\in D(A_L),
\end{equation}
where $P$ is now the projection of $L^2(\mG;\mathbb C)$ onto the positive cone of $L^2(\mG;\R)$, i.e., $Pu:=(\Re u)_+$. Let $u$ be a real-valued function: integrating by parts and neglecting without loss of generality the transmission conditions (due to locality of the operator), one sees that 
  \[
  \Re (A_L u \mid u-Pu)=
  -\int_{\mG} u''' u_- dx=
  \int_{\{u\le 0\}} u''' u dx=
  -\frac12 |u'|^2\Bigg|_{\partial\{u\le 0\}}\ .
  \]
Of course, wherever an $H^3$-function changes sign its first derivative need not vanish, so condition~\eqref{eq:nopositi} cannot be satisfied.
%: this can seen e.g. in the case of the classical Airy equation on $\R$ by taking $u(x):=e^{-|x|}\sin(x)$.

  Analogously, the semigroup is also not $L^\infty$-contractive, i.e., the inequality $\norm{e^{tA_L} u}_\infty\leq \norm{u}_\infty$ fails for some $u\in L^2({\mG})\cap L^\infty({\mG})$ and some $t\geq 0$. In this case, the relevant projection onto the closed convex subset
  $C:=\{u\in L^2({\mG});\; \abs{u}\leq 1\}$
  of $L^2(\mG)$ is defined by $Pu:= \bigl(\abs{u}\wedge 1\bigr)\sign u$, hence 
  \[
  u-Pu:=(|u|-1)_+\sign u\ .
  \]
\end{remark}

We also obtain realness of $(T(t))_{t\ge0}$ in case of separated boundary conditions.

\begin{proposition}
  Let $Y\subseteq \ell^2(\mE_-)\oplus \ell^2(\mE_+)$ be a closed subspace, $U\colon \ell^2(\mE_-,\alpha_-)\to \ell^2(\mE_+,\alpha_+)$ linear and contractive.
  Let $(T(t))_{t\ge0}$ be the $C_0$-semigroup generated by $A_{Y,U}$.
  Assume that $(\Re x,\Re y)\in Y$ for all $(x,y)\in Y$ and
  $U$ to be real, i.e.\ $\Re Ux = U\Re x$ for all $x\in \ell^2(\mE_-,\alpha_-)$. Then $(T(t))_{t\ge0}$ is real.
\end{proposition}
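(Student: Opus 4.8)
The plan is to mimic the proof of Proposition~\ref{prop:realsemig} almost verbatim, reducing realness to a single dissipativity inequality via the criterion of \cite[Corollary 9.6]{ISEM18}. First I would observe that, since $U$ is contractive, so is $U^*$, so Theorem~\ref{thm:contraction_separated} guarantees that $A_{Y,U}$ generates a contraction $C_0$-semigroup and Corollary~\ref{cor:A_dissipative} that $A_{Y,U}$ is dissipative. With the projection $Pu:=\Re u$ onto $L^2(\mG;\R)$ one has $u-Pu=i\Im u$, and the cited criterion states that $(T(t))_{t\ge0}$ is real if and only if $\Re(A_{Y,U}u\mid i\Im u)\le 0$ for all $u\in D(A_{Y,U})$. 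Thus the entire task reduces to verifying this one inequality.

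The crucial preliminary step is to show that $D(A_{Y,U})$ is invariant under taking real and imaginary parts. Each of the three defining conditions must be checked separately. The membership $(u(0-),u(0+))\in Y$ is handled directly by the hypothesis that $Y$ is real, since a complex subspace closed under $\Re$ is automatically closed under $\Im$. The transmission condition $u'(0-)\in D(U)$, $u'(0+)=Uu'(0-)$ is handled by the reality of $U$ exactly as in Proposition~\ref{prop:realsemig}. The only genuinely new point is the middle condition, which requires the element built from $u''(0\pm)$ and $u(0\pm)$ to lie in $Y^\bot$; here I would first establish the small lemma that $Y^\bot$ is real whenever $Y$ is. This holds because a real closed subspace is the complexification of the real subspace of its real-valued elements, and the orthogonal complement of a complexification is the complexification of the (real) orthogonal complement. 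Granting this, and using that the diagonal matrices with entries $-\alpha_-,\alpha_+$ and $-\beta_-/2,\beta_+/2$ have real entries, the operation $\Re$ passes through the middle condition and keeps it in $Y^\bot$. Hence $\Re u,\Im u\in D(A_{Y,U})$.

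Next I would record that $A_{Y,U}$ maps real-valued functions in its domain to real-valued functions, which is immediate since $A_{Y,U}u=\alpha u'''+\beta u'$ with real coefficients. Writing $u=\Re u+i\Im u$ and expanding $(A_{Y,U}u\mid i\Im u)$ using linearity, the cross term reduces to $-i(A_{Y,U}\Re u\mid\Im u)$, which is a pairing of two real-valued functions multiplied by $-i$ and hence purely imaginary; it therefore does not contribute to the real part. The surviving real contribution is exactly $(A_{Y,U}\Im u\mid\Im u)$, using that the inner product is conjugate-linear in its second slot so that $(ia\mid ib)=(a\mid b)$. Since $\Im u\in D(A_{Y,U})$ and $A_{Y,U}$ is dissipative, this quantity is $\le 0$, which is precisely the inequality demanded by the criterion, completing the proof.

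The main obstacle I anticipate is the reality lemma for $Y^\bot$: unlike the $A_L$-formulation of Proposition~\ref{prop:realsemig}, where realness of $L$ settles the domain directly, here the orthogonality constraint in the definition of $D(A_{Y,U})$ forces one to argue that reality is inherited by the orthogonal complement. Everything else is a routine transcription of the earlier proof, the sign bookkeeping in the inner product (conjugate-linear in the second argument, so that $(a\mid ib)=-i(a\mid b)$) being the only place requiring care.
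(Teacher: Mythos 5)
Your proposal is correct and follows essentially the same route as the paper's proof: reduce realness to the inequality $\Re(A_{Y,U}u\mid i\Im u)\le 0$ via the invariance criterion of \cite[Corollary 9.6]{ISEM18}, verify that $D(A_{Y,U})$ is stable under taking real and imaginary parts, and then expand the inner product and invoke dissipativity from Corollary~\ref{cor:A_dissipative}. The only point where you go beyond the paper is in spelling out why $Y^\bot$ inherits reality from $Y$ (the paper simply asserts ``the assumptions imply $\Re u\in D(A_{Y,U})$''), and your complexification argument for that step is correct.
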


\begin{proof}
  By Theorem \ref{thm:contraction_separated}, $(T(t))_{t\ge0}$ is a contraction semigroup.
  Let $P$ be the projection from $L^2({\mG};\C)$ to $L^2({\mG};\R)$, i.e.\ $Pu := \Re u$. 
  By \cite[Corollary 9.6]{ISEM18} realness of the semigroup $(T(t))_{t\ge0}$ is equivalent to
  \[\Re (A_{Y,U} u \mid u-Pu)\leq 0 \quad\hbox{for all }u\in D(A_{Y,U}),\]
  i.e.
  \[\Re (A_{Y,U} u \mid i\Im u) \leq 0 \quad\hbox{for all }u\in D(A_{Y,U}).\]
  Let $u\in D(A_{Y,U})$. The assumptions imply $\Re u\in D(A_{Y,U})$, and therefore also $\Im u\in D(A_{Y,U})$.
  Since $A_{Y,U} v$ is real for $v\in D(A_{Y,U})$ real we have
  \[\Re (A_{Y,U}(\Re u + i\Im u) \mid i\Im u) = \Im (A_{Y,U}\Re u\mid \Im u) + \Re (A_{Y,U} \Im u\mid \Im u) = \Re (A_{Y,U} \Im u\mid \Im u) \leq 0,\]
  since $A_{Y,U}$ is dissipative by Corollary \ref{cor:A_dissipative}.  
\end{proof}

\section{Examples}

\subsection{The case of two halflines}

First, let us consider the case of the real line with a singular interaction at the origin, i.e.\ $\abs{\mE_-} = \abs{\mE_+} = 1$.

\begin{figure}
\begin{tikzpicture}
\draw (-5,0) -- (5,0);
\draw[fill] (-5.1,0) -- (-5.2,0);
\draw[fill] (-5.3,0) -- (-5.4,0);
\draw[fill] (-5.5,0) -- (-5.6,0);
\draw[fill] (5.1,0) -- (5.2,0);
\draw[fill] (5.3,0) -- (5.4,0);
\draw[fill] (5.5,0) -- (5.6,0);
%\draw[fill] (\x:1.85cm) -- (\x:1.9cm);
%\draw[fill] (\x:1.95cm) -- (\x:2cm);
%\draw[fill] (\x:2.05cm) -- (\x:2.1cm);
\draw[fill] (0,0) circle (2pt) node[above]{$0$};
\end{tikzpicture}
\caption{A graph consisting of two halflines.}\label{fig:line}
\end{figure}

We will describe the operator explicitly in the case where the first derivative is separated.

Let $Y\subseteq\C^2$ be a subspace, $U\colon \ell^2(\mE_-,\alpha_-)\to \ell^2(\mE_+,\alpha_+)$ linear, i.e.\ $U\in\C$.
Note that $U$ is contractive if and only if $\abs{U}^2\alpha_+\leq \alpha_-$.

\begin{example}\label{ex:twouncoupled}
  Let $Y:=\{(0,0)\}$. Then
  \begin{align*}
    D(A_{Y,U}) & = \{u\in D(A_0^*);\; u(0-) = u(0+) = 0, \, u'(0+) = U u'(0-)\},\\
    A_{Y,U} u & = -A_0^*u.
  \end{align*}
  By Corollary \ref{cor:skew-self-adjoint_separated}, $A_{Y,U}$ generates a unitary group provided $\abs{U}^2 = \frac{\alpha_-}{\alpha_+}$.
  By Theorem \ref{thm:contraction_separated}, $A_{Y,U}$ generates a semigroup of contractions provided $\abs{U}^2 \leq \frac{\alpha_-}{\alpha_+}$. Observe that if we take $U=0$, we are effectively reducing the Airy equation on the star graph $\mG$ to decoupled Airy equations on two halflines $(0,\infty)$ and $(-\infty,0)$ with Dirichlet conditions (for both equations) and Neumann (on the positive halfline only) boundary conditions. 
  The Airy equation on either of these halflines with said boundary conditions  has been considered often in the literature, see e.g.~\cite{Holmer06,FokasHiMan16}.
\end{example}

\begin{example}
  Let $Y:=\lin\{(0,1)\}$. Then
  \begin{align*}
    D(A_{Y,U}) & = \{u\in D(A_0^*);\; u(0-) = 0,\, u''(0+)\alpha_+ = -\frac{\beta_+}{2} u(0+),\, u'(0+) = U u'(0-)\},\\
    A_{Y,U} u & = -A_0^*u.
  \end{align*}
If $U=0$, these transmission conditions can be interpreted as a reduction of the system to two decoupled halflines: a Dirichlet condition is imposed on one of them, while a transmission condition that is the third-order counterpart of a Robin condition is imposed on the other one, along with a classical Neumann condition.

  By Corollary \ref{cor:skew-self-adjoint_separated}, $A_{Y,U}$ generates a unitary group provided $\abs{U}^2 = \frac{\alpha_-}{\alpha_+}$.
  By Theorem \ref{thm:contraction_separated}, $A_{Y,U}$ generates a semigroup of contractions provided $\abs{U}^2 \leq \frac{\alpha_-}{\alpha_+}$.
\end{example}

\begin{example}
  Let $Y:=\lin\{(1,0)\}$. Then
  \begin{align*}
    D(A_{Y,U}) & = \{u\in D(A_0^*);\; u(0+) = 0,\, u''(0-)\alpha_- = -\frac{\beta_-}{2} u(0-),\, u'(0+) = U u'(0-)\},\\
    A_{Y,U} u & = -A_0^*u.
  \end{align*}
If $U=0$, these transmission conditions can be interpreted as a reduction of the system to two decoupled halflines: Dirichlet and Neumann conditions are imposed on one of them, while the analog of a Robin condition is imposed on the other one.
  By Corollary \ref{cor:skew-self-adjoint_separated}, $A_{Y,U}$ generates a unitary group provided $\abs{U}^2 = \frac{\alpha_-}{\alpha_+}$.
  By Theorem \ref{thm:contraction_separated}, $A_{Y,U}$ generates a semigroup of contractions provided $\abs{U}^2 \leq \frac{\alpha_-}{\alpha_+}$.
\end{example}

Due to lack of conditions on $u''(0+)$ and/or $u''(0-)$,~\eqref{eq:yesmass} cannot be generally satisfied in any of the previous three cases and therefore the corresponding systems do not enjoy conservation of mass.

\begin{example}
  Let $Y:=\lin\{(1,1)\}$. Then
  \begin{align*}
    D(A_{Y,U}) & = \{u\in D(A_0^*);\; u(0-) = u(0+)=:u(0),\, u''(0+)\alpha_+ - u''(0-)\alpha_- = \frac{\beta_- - \beta_+}{2} u(0),\\
    & \qquad\qquad\qquad\quad u'(0+) = U u'(0-)\},\\
    A_{Y,U} u & = -A_0^*u.
  \end{align*}
  By Corollary \ref{cor:skew-self-adjoint_separated}, $A_{Y,U}$ generates a unitary group provided $\abs{U}^2 = \frac{\alpha_-}{\alpha_+}$: observe that this is in particular the case if $\alpha_+=\alpha_-$, $\beta_+=\beta_-$ and $U=1$, meaning that not only $u$, but also $u'$ and $u''$ are continuous in the origin: this is the classical case considered in the literature and amounts to the free Airy equation on $\mathbb R$, see e.g.\ the summary in~\cite[\S~7.1]{LinPon14}. 
  In view of Remark~\ref{rem:mass}, generation of a mass-preserving unitary group still holds under the more general assumption that $U=e^{i\phi}$ for some $\phi \in [0,2\pi)$, while in view of the prescribed transmission conditions~\eqref{eq:yesmass} cannot be satisfied unless $\beta_+=\beta_-$.
Even upon dropping the assumption that $\beta_+=\beta_-$ we obtain a the third order counterpart of a $\delta$-interaction, under which generation of a unitary group is still given.

On the other hand, by Theorem \ref{thm:contraction_separated}, $A_{Y,U}$ generates a semigroup of contractions already under the weaker assumption that $\abs{U}^2 \leq \frac{\alpha_-}{\alpha_+}$.
\end{example}

%\textcolor{orange}{
%\begin{example}
%Let us consider the more general case of a subspace $X$ of $\mathcal G_-\oplus \mathcal G_+$ that describes boundary values of functions whose traces $u(0-),u(0+)$ agree, i.e., $X$ is a space of dimension at most 5 such that
%\[
%X\subseteq \lin\{\mu=(\mu_1,\mu_2,\mu_3,\mu_1,\mu_5,\mu_6):\mu_1,\mu_2,\mu_3,\mu_5,\mu_6\in \C\}
%\]
%and consider for simplicity the case of $\alpha_+=\alpha_-=1$ and $\beta_+=\beta_-=0$, as in~\cite{DecSheSmi16}. Then this space is $\omega$-self-orthogonal if and only if
%\[
%(B_- x\mid u)_{\mathcal G_-}=(B_+ y\mid v)_{\mathcal G_+}\qquad \hbox{for all }(x,y),(u,v)\in X\ ,
%\]
%i.e., if and only if
%\[
%-\mu_3\overline{\nu_1}+\mu_2\overline{\nu_2}-\mu_1\overline{\nu_3}=
%-\mu_6\overline{\nu_1}+\mu_5\overline{\nu_5}-\mu_1\overline{\nu_6}\qquad \hbox{for all }\mu,\nu\in X\ .
%\]
%This is the case if in particular...
%\end{example}
%}

\begin{example}
  Let $Y:=\lin\{(1,-1)\}$. Then
  \begin{align*}
    D(A_{Y,U}) & = \{u\in D(A_0^*);\; u(0-) = - u(0+),\, u''(0+)\alpha_+ + u''(0-)\alpha_- = \frac{\beta_+ - \beta_-}{2} u(0-),\, u'(0+) = U u'(0-)\},\\
    A_{Y,U} u & = -A_0^*u.
  \end{align*}
  By Corollary \ref{cor:skew-self-adjoint_separated}, $A_{Y,U}$ generates a unitary group provided $\abs{U}^2 = \frac{\alpha_-}{\alpha_+}$: we can regard this case as a third-order counterpart of $\delta'$-interactions  of second-order operators. By~\eqref{eq:yesmass}, the system enjoys conservation of mass if and only if additionally
  \begin{equation}\label{eq:yesmass-a2}
2\alpha_+ u''(0+)+\frac{\beta_+ - \beta_-}{2} u(0-)=0\ ,
\end{equation}
which is generally not satisfied (remember that we are not considering the degenerate case $\alpha=0$).
  
By Theorem \ref{thm:contraction_separated}, $A_{Y,U}$ generates a semigroup of contractions provided $\abs{U}^2 \leq \frac{\alpha_-}{\alpha_+}$.
\end{example}

\begin{example}
  Let $Y:=\C^2$. Then
  \begin{align*}
    D(A_{Y,U}) & = \{u\in D(A_0^*);\;u''(0-)\alpha_- = -\frac{\beta_-}{2}u(0-),\, u''(0+)\alpha_+ = -\frac{\beta_+}{2} u(0+),\, u'(0+) = U u'(0-)\},\\
    A_{Y,U} u & = -A_0^*u.
  \end{align*}
  These transmission conditions amount to considering two decoupled systems,
each with Robin-like conditions along with a Neumann condition on one of them.
  By Corollary \ref{cor:skew-self-adjoint_separated}, $A_{Y,U}$ generates a unitary group provided $\abs{U}^2 = \frac{\alpha_-}{\alpha_+}$; by~\eqref{eq:yesmass}, the system enjoys conservation of mass if and only if additionally
  \begin{equation}\label{eq:yesmass-a1}
-\frac{\beta_-}{2}u(0-)+\frac{\beta_+}{2} u(0+)+\beta_-u(0-)-\beta_+ u(0+)=0\ ,
\end{equation}
which is generally only satisfied if $\beta_-=\beta_+=0$.
  By Theorem \ref{thm:contraction_separated}, $A_{Y,U}$ generates a semigroup of contractions provided $\abs{U}^2 \leq \frac{\alpha_-}{\alpha_+}$.
\end{example}

Let us now describe one particular example where the first derivative is not separated and the corresponding semigroup is unitary.
\begin{example}
  We consider the case $\alpha_- = \alpha_+ = 1$, $\beta_- = \beta_+ = 0$.
  Define the $3\times 3$-matrix $L:\mathcal G_-\to \mathcal G_+$ by
  \[L = \begin{pmatrix}
	  1 & 0 & 0 \\
	  \sqrt{2} & 1 & 0 \\
	  1 & \sqrt{2} & 1
        \end{pmatrix}.\]
  An easy calculation yields
  \[L^* \begin{pmatrix}
	  0 & 0 & -1 \\
	  0 & 1 & 0 \\
	  -1 & 0 & 0 
        \end{pmatrix} L = \begin{pmatrix}
	  0 & 0 & -1 \\
	  0 & 1 & 0 \\
	  -1 & 0 & 0 
        \end{pmatrix},\]
  i.e.\ $L^*B_+ L = B_-$. Thus, for $x,y\in \mathcal{G}_-$ we have
  \[\langle Lx \mid Ly\rangle_+ = \bigl(B_+Lx \mid Ly\bigr) = \bigl(L^*B_+L x\mid y \bigr) = \bigl( B_- x \mid y\bigr) = \langle x \mid y\rangle_-.\]
  Hence, $L$ defines a $(\mathcal{G}_-,\mathcal{G}_+)$-unitary operator. Combining Theorem \ref{thm:characselforthog} and Theorem \ref{thm:characskewself} we obtain that $A_L$ is skew-self-adjoint.
  Note that
  \begin{align*}
    D(A_L) = \{u\in D(A_0^*);\; & u(0+) = u(0-)=:u(0),\, u'(0+) = \sqrt{2} u(0) + u'(0-),\\
    & u''(0+) = u(0) + \sqrt{2} u'(0-) + u''(0-)\},
  \end{align*}
  so the transmission conditions couple the values of the function, its first and second derivative at the boundary point. By~\eqref{eq:yesmass}, mass is not conserved.\\
 \begin{remark}
As mentioned in the Introduction, in the recent paper ~\cite{DeconinckSheilsSmith16} the Airy equation is treated (with $\beta=0$, which makes difference when translation invariance is broken) on a line with an interface at a point where  linear transmission conditions are imposed. The authors give conditions for the solvability of  the evolution  problem in terms of the coefficients appearing in the transmission conditions. An interesting problem could be to compare the conditions there obtained with the ones given in the present paper in the case of two half-lines. 
 \end{remark}
\end{example}

\subsection{The case of three halflines}

Let us now describe the operator (again with separated first derivative) for the case of three halflines.
Let $\abs{\mE_-} = 1$, $\abs{\mE_+} = 2$, which describes a branching channel. By Proposition \ref{prop:first} it is impossible that the Airy equation is governed by a unitary group in this setting; however, we are going to discuss a few concrete cases where a semigroup of contractions is generated by the Airy operator.

Let $Y$ be a subspace of $\ell^2(\mE_-)\oplus \ell_+(\mE_+) \cong \C\oplus \C^2 \cong \C^3$ and $U\colon \ell^2(\mE_-,\alpha_-)\to \ell^2(\mE_+,\alpha_+)$ be a linear mapping, i.e.\ $U\in M_{21}(\C)\simeq \C^2$: we denote for simplicity
\[
U=(U_1,U_2)\ .
\]
Then $U$ is -contractive if and only if $\abs{U_1}^2\alpha_{+,1} + \abs{U_2}^2\alpha_{+,2} \leq \alpha_-$.
Note that $U^*$ is given by $U^*_1 = \overline{U_1}\frac{\alpha_{+,1}}{\alpha_-}$, $U^*_2 = \overline{U_2}\frac{\alpha_{+,2}}{\alpha_-}$.
% , and $U^*$ is contractive if and only if
% \[\abs{U^\sharp_1 y_1 + U^\sharp_2 y_2}^2 \alpha_- \leq \abs{y_1}^2\alpha_{+,1} + \abs{y_2}^2\alpha_{+,2} \qquad\hbox{for all }(y_1,y_2)\in \C^2,\]
% i.e.
% \[\abs{\overline{U_1}\frac{\alpha_{+,1}}{\alpha_-}y_1 + \overline{U_2}\frac{\alpha_{+,2}}{\alpha_-}y_2}^2 \alpha_- \leq \abs{y_1}^2\alpha_{+,1} + \abs{y_2}^2\alpha_{+,2} \qquad\hbox{for all }(y_1,y_2)\in \C^2.\]
\begin{example}
  Let $Y:=\{(0,0,0)\}$. Then
  \begin{align*}
    D(A_{Y,U}) & = \{u\in D(A_0^*);\; u(0-) = u_1(0+) = u_2(0+) = 0, \, (u_1'(0+),u_2(0+))^\top = u'(0-) U\},\\
    A_{Y,U} u & = -A_0^*u.
  \end{align*}
As in Example~\ref{ex:twouncoupled}, whenever $U=(0,0)$ the associated system effectively reduces to three decoupled halflines with Dirichlet, resp.\ Dirichlet and Neumann boundary conditions.
  By Theorem \ref{thm:contraction_separated}, $A_{Y,U}$ generates a semigroup of contractions provided 
  \[
 % \begin{split}
 \abs{U_1}^2\alpha_{+,1} + \abs{U_2}^2\alpha_{+,2} \leq \alpha_-.
 %,\\
 % \abs{\overline{U_1}\frac{\alpha_{+,1}}{\alpha_-}y_1 + \overline{U_2}\frac{\alpha_{+,2}}{\alpha_-}y_2}^2 \alpha_-& \leq \abs{y_1}^2\alpha_{+,1} + \abs{y_2}^2\alpha_{+,2}
 % \end{split}
 % \qquad\hbox{for all }(y_1,y_2)\in \C^2.
  \]
In view of~\eqref{eq:yesmass}, mass is generally not conserved regardless of $U$.
\end{example}

\begin{example}
  Let $Y:=\lin\{(1,1,1)\}$. Then
  \begin{align*}
    D(A_{Y,U}) & = \{u\in D(A_0^*);\; u(0-) = u_1(0+) = u_2(0+)=:u(0),\\
    & \qquad\qquad\qquad -\alpha_- u''(0-) + \alpha_{+,1} u_1''(0+) + \alpha_{+,2}u_2''(0+) = \frac{\beta_- - \beta_{+,1}-\beta_{+,2}}{2} u(0),\\
    & \qquad\qquad\qquad (u_1'(0+),u_2(0+))^\top = u'(0-) U\},\\
    A_{Y,U} u & = -A_0^*u.
  \end{align*}
  Observe that the considered transmission conditions impose continuity of the values of $u$ in the center of the star; in fact, the transmission conditions are the analog of a  $\delta$-interaction.
  By Theorem \ref{thm:contraction_separated}, $A_{Y,U}$ generates a semigroup of contractions provided 
 \[
 % \begin{split}
 \abs{U_1}^2\alpha_{+,1} + \abs{U_2}^2\alpha_{+,2} \leq \alpha_-.
 %,\\
% \abs{\overline{U_1}\frac{\alpha_{+,1}}{\alpha_-}y_1 + \overline{U_2}\frac{\alpha_{+,2}}{\alpha_-}y_2}^2 \alpha_- &\leq \abs{y_1}^2\alpha_{+,1} + \abs{y_2}^2\alpha_{+,2}
% \end{split}
 % \qquad\hbox{for all }(y_1,y_2)\in \C^2.
 \]
  Regardless of $U$, this semigroup is mass preserving if and only if $\beta_- - \beta_{+,1}-\beta_{+,2}=0$.
\end{example}

\vskip10pt

\vskip5pt {\bf Acknowledgements.}
\par\noindent 
The authors are grateful to Roberto Camassa for a useful discussion.

%\bibliographystyle{alpha}
%\bibliography{../../referenzen/literatur}
\end{document}